\newtheorem{Theorem}{Theorem}[section]
\newtheorem{Proposition}[Theorem]{Proposition}
\newtheorem{Corollary}[Theorem]{Corollary}
\newtheorem{Remark}[Theorem]{Remark}
\newtheorem{Definition}[Theorem]{Definition}
\newtheorem*{Definition*}{Definition A}
\newtheorem*{Theorem*}{Theorem}
\DeclareMathOperator{\im}{im}
\DeclareMathOperator{\vol}{vol}
\newcommand*{\overbar}[1]{\mkern 1.5mu\overline{\mkern-1.5mu#1\mkern-1.5mu}\mkern 1.5mu}
\definecolor{darkgreen}{rgb}{0.0, 0.5, 0.0}
\definecolor{darkred}{rgb}{0.7, 0.11, 0.11}
\begin{document}

\author{Charlie Beil}
\address{Institut f\"ur Mathematik und Wissenschaftliches Rechnen, Universit\"at Graz, Heinrichstrasse 36, 8010 Graz, Austria.}
\email{charles.beil@uni-graz.at} 
\title[A derivation of the first generation masses from internal spacetime]{A derivation of the first generation particle masses from internal spacetime} 
\keywords{Spacetime geometry, general relativity, lattice QCD, standard model.}

\begin{abstract}
Internal spacetime geometry was recently introduced to model certain quantum phenomena using spacetime metrics that are degenerate. We use the Ricci tensors of these metrics to derive a ratio of the bare up and down quark masses, obtaining $m_u/m_d = 9604/19683 \approx .4879$. This value is within the lattice QCD value $.473 \pm .023$, obtained at $2 \operatorname{GeV}$ in the minimal subtraction scheme using supercomputers. Moreover, using the Levi-Cevita Poisson equation, we derive ratios of the dressed electron mass and bare quark masses. For a dressed electron mass of $.511 \operatorname{MeV}$, these ratios yield the bare quark masses $m_u \approx 2.2440 \operatorname{MeV}$ and $m_d \approx 4.599 \operatorname{MeV}$, which are within/near the lattice QCD values $m^{\overline{\operatorname{MS}}}_u = (2.20\pm .10) \operatorname{MeV}$ and $m^{\overline{\operatorname{MS}}}_d = (4.69 \pm .07) \operatorname{MeV}$. Finally, using $4$-accelerations, we derive the ratio $\tilde{m}_u/\tilde{m}_d = 48/49 \approx .98$ of the constituent up and down quark masses. This value is within the $.97 \sim 1$ range of constituent quark models. All of the ratios we obtain are from first principles alone, with no free or ad hoc parameters. Furthermore, and rather curiously, our derivations do not use quantum field theory, but only tools from general relativity. 
\end{abstract}

\maketitle

\tableofcontents

\section{Introduction} \label{Introduction}

Internal spacetime is a modification of general relativity in which spacetime metrics are degenerate along the worldlines of fundemantal particles. 
The aim of this modification is to resolve the measurement problem for certain quantum systems using spacetime geometry. 
For example, tangent spaces on an internal spacetime have variable dimension, and the projection from one tangent space to another corresponds to spin wavefunction collapse. 
In particular, tangent space projection is able to reproduce the Born rule for both spin and polarization. 

We briefly define the geometry. 
Let $(\tilde{M},\tilde{g})$ be a Lorentzian manifold with metric $\tilde{g}_{ab}: \tilde{M}_p \otimes \tilde{M}_p \to \mathbb{R}$, where $\tilde{M}_p$ is the tangent space at a point $p \in \tilde{M}$.
We endow $\tilde{M}$ with a (locally finite) collection of fundamental particles called \textit{pointons}.
In this article we will only consider pointons with timelike worldlines.
If a pointon worldline $\beta \subset \tilde{M}$ intersects $p \in \tilde{M}$, then we define the \textit{internal metric} 
\begin{equation} \label{h}
h_{ab}: \tilde{M}_p \otimes \tilde{M}_p \to \mathbb{R}
\end{equation}
to be the projection $h_{ab} = g_{ab} + v_av_b$ of the tangent $4$-vector $v^a$ (i.e., $4$-velocity) to $\beta$ at $p$. 
In general, $h_{ab}$ projects out the tangent vector to each pointon worldline intersecting $p$. 
We call the subspace image $M_p := \im h \subset \tilde{M}_p$ the \textit{internal tangent space} at $p$, and the resulting new geometry \textit{internal spacetime},
\begin{equation*}
M := (\tilde{M} \setminus (\cup_i \beta_i)) \cup (\cup_j \{ \beta_j \}).
\end{equation*}
Consequently, time is stationary along the worldlines of pointons, that is, \textit{pointon worldlines are single points of spacetime $M$}.  

To see how this geometry may be useful in modeling certain quantum phenomena, consider two particles that are entangled at a point $p \in \tilde{M}$ and are then separated. 
If time is stationary along their worldlines, then regardless of how far apart the particles are in space, they remain `touching' at $p$ in spacetime---\textit{their correlation is nonlocal in space but local in spacetime.} 

There is an immediate problem with the model, however: it is based on point particles that follow geodesics in spacetime, and such particles will generically never meet.
Consequently, pointons should almost never interact.
This problem does not arise in a (classical or quantum) field theory because a field excitation is spread out over a small region of space, peaked at the point of excitation.
Nevertheless, if we modify $h_{ab}$ in (\ref{h}) so that it becomes continuous, then the problem disappears.
Specifically, in Theorem \ref{pointon stress-energy theorem} we show that a single stationary pointon is equivalent to a fluid with pressure and shear viscosity, that is, \textit{a pointon turns the spacetime around it into a honey-like fluid.}
Consequently, although pointons continue to follow geodesic trajectories, they behave as field excitations spread out over small regions of space.

An internal spacetime model of the standard model particles, derived from the free Dirac Lagrangian, was recently introduced in \cite{B1,B2,B4} and is outlined in Section \ref{Preliminary section}.
In this article we investigate the geometry of the first generation fermions (the electron neutrino $\nu_e$, electron $e$, up quark $u$, and down quark $d$).
The simplest degenerate geometric configurations that are able to represent these particles are:
\begin{center}
\begin{tabular}{rcl}
particle && degenerate worldvolume in $M$\\
\hline
$\nu_e$ & $\longleftrightarrow$ & $\varnothing$ (a free spinor) \\
$e$ & $\longleftrightarrow$ & a point (a \textit{pointon})\\
$u$ & $\longleftrightarrow$ & a sphere (a \textit{pointal sphere})\\
$d$ & $\longleftrightarrow$ & a union of a point and a sphere (a \textit{pointal union})
\end{tabular}
\end{center}
In Section \ref{first section} we modify the internal metrics $h_{ab}$ of these four configurations in Mink-owski space $\tilde{M} = \mathbb{R}^{1,3}$ so that they (i) are continuous, and (ii) support a spinor.
In particular, the spinor of an on-shell pointon follows a circular null geodesic. 
We denote these new metrics $g_{ab}$.
 
With the four metrics in hand, we use the Einstein field equation and generalized Poisson equation $\vec{\nabla}^2 \sqrt{-g_{tt}} = -R^t_t \sqrt{-g_{tt}}$ in Sections \ref{energy density section}--\ref{gravitational mass section} to determine their energy densities and masses.
Recall that a quark's constituent mass is the amount of binding energy required to make a hadron emit a meson containing the quark. 
It is an effective mass that includes contributions from gluons and sea-quarks, and is hadron-dependent.
Denote by $m_e$ (resp.\ $\tilde{m}_e$) the bare (resp.\ dressed) electron mass, and by $m_{u/d}$ (resp.\ $\tilde{m}_{u/d}$) the bare (resp.\ constituent) up/down quark masses.
Our main result is the following (see Theorems \ref{sphere metric theorem2}, \ref{energy densities}, \ref{constituent}, and Corollaries \ref{main corollary1}, \ref{positive mass}).

\begin{Theorem*}
On an asymptotically flat internal spacetime, the Einstein field equation and generalized Poisson equation yield
\begin{equation*}
m_e = 0, \ \ \ \ \tilde{m}_u/\tilde{m}_d = 48/49 \approx .9796,
\end{equation*}
\begin{equation} \label{three ratios}
\tilde{m}_e/m_u = 2187/9604, \ \ \ \ \tilde{m}_e/m_d = 1/9, \ \ \ \ m_u/m_d = 9604/19683 \approx .4879.
\end{equation}
In contrast to the other masses, the neutrino mass $m_{\nu_e}$ is not fixed.
Furthermore, the (gravitational) mass of each configuration is positive.
\end{Theorem*}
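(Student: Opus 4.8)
The plan is to read the five displayed relations off the four metric constructions of Section~\ref{first section} together with the curvature computations of Sections~\ref{energy density section}--\ref{gravitational mass section}, keeping in mind that the bare masses, the dressed electron mass, and the constituent quark masses are three logically separate extractions from the \emph{same} metrics $g_{ab}$ --- the bare ratios from Ricci tensors, the dressed electron ratios from the Levi-Civita Poisson equation, and the constituent ratio from $4$-accelerations.

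First I would dispose of the two ``endpoints''. The neutrino is the empty degenerate worldvolume, so $g_{ab}$ is undeformed Minkowski: every curvature-derived quantity vanishes identically, no field equation constrains $m_{\nu_e}$, and the last sentence of the statement follows. For the pointon, $g_{ab}$ is the continuous spherically symmetric deformation of Section~\ref{first section}, degenerate at a single point; substituting its Ricci component $R^t_t$ into $\vec{\nabla}^2\sqrt{-g_{tt}} = -R^t_t\sqrt{-g_{tt}}$ (equivalently, into the Einstein equation with the stress--energy tensor of Theorem~\ref{pointon stress-energy theorem}) produces an energy density supported only at the origin, so that $\int_{\mathbb{R}^3}\rho\,\vol = 0$: the shear-viscous, positive-pressure fluid the pointon induces carries no net gravitating mass, i.e.\ $m_e = 0$. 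This is the content of Corollary~\ref{main corollary1}.

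Next I would handle the two quarks. For the pointal sphere I would write out $\sqrt{-g_{tt}}$ and the Ricci tensor of the metric of Theorem~\ref{sphere metric theorem2}, pass to the energy density, and integrate radially to obtain $m_u$; the pointal union (point $\cup$ sphere) gives $m_d$ the same way. The essential input is that continuity, asymptotic flatness, and support of a spinor whose on-shell trajectory is a circular null geodesic together fix the sphere radius \emph{relative} to the null-geodesic radius, and it is precisely this ratio that makes the two radial integrals evaluate to numbers whose quotient is $m_u/m_d = 9604/19683$; this is Theorem~\ref{energy densities}. The dressed electron mass $\tilde{m}_e$ I would then extract via the same Levi-Civita Poisson equation applied to the pointon together with its circular-null-geodesic spinor, and comparing the resulting potential with the sphere and union energy integrals gives $\tilde{m}_e/m_u = 2187/9604$ and $\tilde{m}_e/m_d = 1/9$; note these are consistent with the quark ratio, since $(2187/9604)(9604/19683) = 1/9$. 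This is again Corollary~\ref{main corollary1}.

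Finally, for the constituent masses I would change tools entirely: since $\tilde{m}_{u/d}$ is the binding energy required to make a hadron emit a meson, I would identify it with the work done against the $4$-acceleration of the relevant spinor worldline on the pointal sphere and on the pointal union, and show that the ratio of these accelerations is $48/49$ --- this is Theorem~\ref{constituent}. Positivity of every gravitational mass then follows from Corollary~\ref{positive mass} once the energy densities found above are checked to be nonnegative. I expect the main obstacle to be the middle step: the rational values $3^9$, $2^2\cdot 7^4$, and $3^7$ are not forced by any symmetry but only by the precise radius ratio coming out of the simultaneous continuity, asymptotic-flatness, and spinor-support conditions of Section~\ref{first section}, so the real work is checking that those conditions pin down exactly that ratio --- once they do, the remaining Poisson-equation and integration steps are bookkeeping.
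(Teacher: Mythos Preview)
Your outline has the right list of ingredients but misidentifies how several of them actually work, and two of the misidentifications are genuine gaps.

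First, the neutrino is \emph{not} modeled by undeformed Minkowski space. In Theorem~\ref{sphere metric theorem2} it is the $\eta\in(-r_0,0)$ case of the same one-parameter family as the pointal sphere; the metric is nondegenerate but nontrivial. The reason $m_{\nu_e}$ is unconstrained is Proposition~\ref{prop2}: for $\eta<0$ the exponent $n$ need not be an integer, so $\eta$ ranges over a continuum. Relatedly, your positivity argument fails: the neutrino energy density is \emph{negative} for $r\ge r_0$ (Section~\ref{neutrino section}), so ``check the energy densities are nonnegative'' is simply false. The paper's Corollary~\ref{positive mass} instead reads off the sign of the radial component of $\nabla_{e_t}e_t$ from Proposition~\ref{accelerations}.

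Second, and more structurally, the bare and dressed mass ratios are \emph{not} obtained by radial integration. The paper evaluates the energy densities pointwise at the spinor radius $r=r_0$ and takes ratios there: $m_u/m_d := \rho_{u_1}(r_0)/\rho_d(r_0)$ from (\ref{ny}), and $\tilde m_e/m_{u,d} := \tilde\rho_e(r_0)/\rho_{u,d}(r_0)$ with $\tilde\rho_e = -\tfrac{2}{\kappa}R^t_t$. The specific integers $9604=2^2\cdot7^4$, $19683=3^9$, $2187=3^7$ come from plugging the discrete admissible radii $\eta\in\{r_0/7,\,r_0/3\}$ of Proposition~\ref{prop2} into those pointwise formulas, not from any integral. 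Likewise $m_e=0$ is not a statement about support at the origin: for the pointon $B\equiv 1$, and since (\ref{rho's}) depends only on $B$, one has $\rho_e(r)\equiv 0$ identically. Your integration scheme would have to reproduce exactly these rationals, and there is no indication that $\int \rho\,dV$ even converges for these metrics, let alone to the same ratios.
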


In the standard model (with the Higgs boson), the bare electron mass is zero, $m_e = 0$, and thus agrees with our model. 
If we take the dressed electron mass to be $\tilde{m}_e = .511 \operatorname{MeV}$, then the ratios (\ref{three ratios}) imply the bare quark masses
\begin{equation*}
m_u \approx 2.2440 \operatorname{MeV},
\ \ \ \ \ \ \ \
m_d \approx 4.599 \operatorname{MeV}.
\end{equation*}
These are within/near the average lattice QCD values in the minimal subtraction ($\overbar{\text{MS}}$) scheme at $\mu = 2 \operatorname{GeV}$, with four active light quark flavors.
Recent estimates given by the Particle Data Group \cite[60.7--60.9]{PDG} are
\begin{equation*}
\begin{split}
\overline{m}_u & = (2.20\pm .04 \pm .06[\pm .07]) \operatorname{MeV},\\
\overline{m}_d & = (4.69 \pm .03 \pm .04[\pm .05]) \operatorname{MeV},\\
\overline{m}_u/\overline{m}_d & = .473 \pm .007 \pm .016[\pm .017]. 
\end{split}
\end{equation*}
The first error is statistical, the second is systematic, and the value in brackets is the square root of the sum of the squared errors.
The estimates are obtained by averaging over models where vacuum polarization effects result from up, down, and strange sea-quarks ($N_f = 2 +1$), and up, down, strange, and charm sea-quarks ($N_f = 2+1+1$); in both cases the up and down quarks are assumed to be degenerate \cite[Sections 60.4, 60.5.1]{PDG}.
(To note, these estimates differ from the PDG particle listings since they are based solely on lattice computations, whereas in the particle listings non-lattice estimates are included.
In addition, both the lattice estimates that are retained and the averaging procedures differ \cite[Section 60.4]{PDG}.)

Furthermore, the constituent quark mass ratio we obtain, namely $\tilde{m}_u/\tilde{m}_d = 48/49$, is within the range obtained from different constituent quark models, $.97 \sim 1$ (see \cite[p 630]{MW}, \cite[p 8]{OSY}, \cite[IV.B.4]{KR}, and references therein).
Our value is determined from the $4$-accelerations $\nabla_{e_t}e_t$ of the corresponding internal spacetime metrics.

Finally, in Section \ref{neutrino section} we graph the energy densities of the pointal configurations.
We find that, away from their `interaction radii', up and down quarks generate positive energy densities while electron neutrinos generate negative energy densities---even though they have positive gravitational mass.
We speculate that this surprising property of electron neutrinos could account for dark matter, since gravitation also yields a negative energy density.

Degenerate spacetime metrics have appeared in other contexts in the literature, such as in the regularization of the big bang singularity \cite{Ba,K1,K2,K3,KW} and loop quantum gravity \cite{LW}.
Recently, \cite{GB} established a unique covariant pseudoinverse to degenerate metrics of constant rank, thus enabling a novel generalization of general relativity to metrics which are degenerate. We note, though, that the metrics we consider here do not have constant rank.
Our use of metric degeneracy originates from nonnoetherian algebraic geometry, which in turn arose from the vacuum geometry of certain quiver gauge theories in string theory (see \cite{F-K,B3} and references therein).

\textbf{Notation:} Tensors labeled with upper and lower indices $a,b, \ldots$ represent covector and vector slots respectively in Penrose's abstract index notation (so $v^a \in V$ and $v_a \in V^*$), and tensors labeled with indices $\mu, \nu, \ldots$ denote components with respect to a coordinate basis. 
Throughout, we will use natural units $\hbar = c = G = 1$ and the signature $(-,+,+,+)$.
We denote by $\kappa := 8 \pi G$ the normalized gravitational constant, and by $d\Omega^2 := d \theta^2 + \sin^2 \theta d \phi^2$ the spherical metric.

\section{Standard model particles from internal spacetime geometry} \label{Preliminary section}

In this section we outline the construction of the standard model particles from internal spacetime, introduced in \cite{B2,B4}; the specific model of color charge given in (\textsc{b}) below is new.
For brevity, set $h := h_{ab}$ in (\ref{h}).

Let $\beta \subset \tilde{M}$ be the worldline of a pointon; then $\beta$ is a single point of spacetime $M$.
In particular, the pointon's $4$-velocity $v$ vanishes,
\begin{equation*}
v \in \ker h.
\end{equation*}
We must therefore replace $v$ with something that lives in the internal tangent space $M_{\beta(t)} := \im h \subseteq v^{\perp} \subset \tilde{M}_{\beta(t)}$, where $v^{\perp}$ is the orthogonal subspace to $v$ in $\tilde{M}_{\beta(t)}$. 

Recall that a unit vector $\vec{v}$ in $\mathbb{R}^3$ is specified by the plane $\vec{v} \cdot \vec{x} = 0$ orthogonal to $\vec{v}$, together with a choice of orientation.
Indeed, if $e_1 = \vec{v}, e_2, e_3$ is an orthonormal basis, then $\vec{v} = \pm e_2 \times e_3$.
This is generalized to higher dimensions using differential forms: the orthogonal subspace of a (co)vector $v$ is specified by its Hodge dual $\star v$.
In our example, $\star \vec{v} = o e_2 \wedge e_3$, where $o \in \{ \pm 1 \}$ is a choice of orientation.
Just as an orientation is needed define the cross product in $\mathbb{R}^3$, an orientation is needed to define a Hodge dual, that is, Hodge duals are \textit{pseudo-forms}.

We may thus replace the $4$-velocity $v$ of an isolated pointon with its Hodge dual $\star v$, together with a timelike orientation $o_0 \in \{ \pm 1\}$.
Since a choice of orientation $o_{\tilde{M}} \in \{ \pm 1 \}$ of $\tilde{M}_{\beta(t)}$ is unphysical, this new timelike orientation $o_0$ must be independent of $o_{\tilde{M}}$.
\textit{We identify $o_0$ with the electric charge of the pointon.}

\begin{Definition} \cite[3.1]{B4} \rm{
The \textit{internal $4$-velocity} of a pointon at $p = \beta(t) \in \tilde{M}$ with $4$-velocity $v \in \tilde{M}_p$ is the pseudo-form
\begin{equation*} \label{internal form}
\breve{v}_{a \cdots b} := o_{\operatorname{ker}h} \star \! \vol (\ker h) \in {\bigwedge} \! ^{\dim M_p} \, M_p^*,
\end{equation*}
where $\vol(\ker h)$ is the volume form for the kernel of $h$ at $p$, and $o_{\operatorname{ker}h} \in \{ \pm 1 \}$ is a free choice of orientation, independent of $o_{\tilde{M}}$.
}\end{Definition}

For each vanishing subspace of $\tilde{M}_{\beta(t)}$ we obtain a new free choice of orientation, and we identify these orientations with charge or spin,
\begin{equation*}
\textit{vanishing subspace} \ \ \ \longleftrightarrow \ \ \ \textit{choice of orientation} \ \ \ \longleftrightarrow \ \ \ \textit{charge or spin}
\end{equation*}
These identifications are given in Table \ref{c s table}.
For the remainder of this section, let $i,j,k \in \{ 1,2,3\}$ be distinct, and let $e_0,e_1, e_2, e_3$ be an orthonormal tetrad parallel transported along $\beta$.

\begin{table}
\caption{Charge and spin from vanishing subspaces of spacetime tangent spaces \cite{B2,B4}.
Here we assume that the unphysical choice of orientation $o_{\tilde{M}} = o_{0123} = o_0o_1o_2o_3 \in \{ \pm 1 \}$ of $\tilde{M}_{\beta(t)}$ is $1$.}
\label{c s table}
\begin{center}
\begin{tabular}{l:l:l}
\hline
vanishing subspace \ \ \ & free orientation \ \ \ & identification\\
\hline \hline
$e_0$ & $o_0 = o_{123}$ & electric charge $e^{o_0}$\\
\hline
$e_0 \wedge e_2 \wedge e_3$ & $o_1 = o_{023}$ & color charge $r^{o_1}$\\
$e_0 \wedge e_3 \wedge e_1$ & $o_2 = o_{013}$ & color charge $g^{o_2}$\\
$e_0 \wedge e_1 \wedge e_2$ & $o_3 = o_{012}$ & color charge $b^{o_3}$\\
\hline
$e_2 \wedge e_3$ & $o_{23} = o_{01}$ & spin in the direction $o_{01}e_1$\\
$e_3 \wedge e_1$ & $o_{13} = o_{02}$ & spin in the direction $o_{02}e_2$\\
$e_1 \wedge e_2$ & $o_{12} = o_{03}$ & spin in the direction $o_{03}e_3$\\
\hline
\end{tabular}
\end{center}
\end{table}

(\textsc{a}) \textit{Spin $\tfrac 12$ from $o_{jk}$:}
Suppose $\dim M_{\beta(t)} > 1$ for $t \in (t_0,t_1)$, and $\dim M_{\beta(t_0)} = 1 = \dim M_{\beta(t_1)}$. 
Then at the points $\beta(t_0), \beta(t_1) \in \tilde{M}$, the internal $4$-velocity is an actual (spacelike unit) vector,
\begin{equation*}
\breve{v} = o_{0jk}e_i = o_0o_{jk}e_i.
\end{equation*}
We identify the free orientation $o_{jk}$ with the pointon's spin, up or down, in the $e_i$ direction at $\beta(t_0)$ \cite[Section 4]{B3}.
Let $s$ be a $4$-vector parallel transported along $\beta$, such that $s$ projects onto $\breve{v}$ at $t_0$, that is, $s$ is a section of the projection at $t_0$.
The probability that a particular section $s$ is chosen is given by the \textit{Kochen-Specker probability} $p(s|\breve{v}) = \pi^{-1} s^a \breve{v}_a$.
Then, at $t_1$, $s$ is projected onto the line $M_{\beta(t_1)}$, becoming the internal $4$-velocity $\breve{v}$ there.
This tangent space projection corresponds to a measurement of spin in the direction $M_{\beta(t_1)}$ and \textit{exactly reproduces the Born rule} \cite{KS}. 
We call $s$ the \textit{spin vector} of the pointon.

The spin vector $s$ is realized geometrically as the angular momentum vector of an associated \textit{spinor particle} of angular velocity $\omega$ and radius $r_0$.
The particle's trajectory $\alpha \subset \tilde{M}$ defines a spinor $\psi \in \mathbb{C}^4$ by \cite[Definition 3.2]{B2}\footnote{Our model is fundamentally built from maximal covariantly independent sets of $\gamma^5$ and $\gamma^0$ eigenspinors derived in \cite[Propositions 3.5, 5.7, Lemma 5.8]{B2}, namely
\begin{equation*}
[0a] := \left[ \begin{smallmatrix} 0 \\ 0 \\ -e^{-ai \omega t}\\ 0 \end{smallmatrix} \right]_{\! \mathscr{C}},  \ \ 
[a0] := \left[ \begin{smallmatrix} e^{ai \omega t}\\ 0 \\ 0 \\ 0 \end{smallmatrix} \right]_{\! \mathscr{C}}, \ \ 
[ab] := [a0] + [0b] = \left[ \begin{smallmatrix} 0 \\ 0 \\ -e^{ai \omega t}\\ 0 \end{smallmatrix} \right]_{\! \mathscr{D}}, \ \ 
[\downarrow \! * ]/[* \! \downarrow ] := \left[ \begin{smallmatrix} e^{\pm i \omega t}\\ 0 \\ 0 \\ 0 \end{smallmatrix} \right]_{\! \mathscr{D}},
\end{equation*}
where $a,b \in \{ \uparrow \, = \! +1, \downarrow \, = \! -1\}$ with $a \not = b$; and $\mathscr{C}$, $\mathscr{D}$ denote chiral and Dirac bases.  
On an internal spacetime, $\gamma^5$ eigenspinors have electric charge and (color-neutral) $\gamma^0$ eigenspinors are neutral.}
\begin{equation*}
\psi = \gamma(h_{ab}\dot{\alpha}^b) = h_{ab}\gamma^a \dot{\alpha}^b \in \mathbb{C}^4.
\end{equation*}
Consequently, the pointon's rest energy $E_0$ and mass $m$ are also defined from $\alpha$,
\begin{equation} \label{E0 =}
E_0 = \omega = \hbar \omega, \ \ \ \ \ m = r_0^{-1} = \hbar/(c r_0),
\end{equation}
where units have been restored on the right.
Denoting the particle's tangential speed by $u = \omega r_0$, these relations imply \begin{equation*}
E_0 = mcu.
\end{equation*}
The pointon is then on shell if $u = c$, and off shell otherwise \cite[Section 4]{B2}.
From this characterization, off-shell particles (i) never violate relativity in that $k^2 = E_0^2$ always holds; and (ii) are \textit{classical equations of motion} of the modified Dirac Lagrangian,
\begin{equation} \label{modified Dirac}
\mathcal{L} = \bar{\psi}(i \slashed \partial - \omega) \psi = \bar{\psi}(i \slashed \partial - mu) \psi.
\end{equation}
We call $r_0$ the \textit{interaction radius} of the pointon. 

\begin{table} \label{first generation table}
\caption{The first generation fermions, $\nu_e, e, d, u$, are the geoms with exactly one nonzero spinor. Here, the indices $i,j,k \in \{ 1,2,3\}$ are distinct, and we have taken $o_{\tilde{M}} = 1$.}
\begin{center}
$\begin{array}{lll:l||lll:l||lll:l||lll:l}
\multicolumn{4}{l||}{\text{free spinor:}} & \multicolumn{4}{l||}{\text{pointon:}} & \multicolumn{4}{l||}{\text{pointal sphere:}} & \multicolumn{4}{l}{\text{pointal union:}}\\
o_0 & o_i & o_{jk} & \nu_e &
o_0 & o_i & o_{jk} & e &
o_0 & o_i & o_{jk} & u &
o_0 & o_i & o_{jk} & d\\ 
\hline
\varnothing & \varnothing & \textcolor{blue}{+} & [* \! \downarrow] &
\textcolor{red}{+} & \varnothing & \textcolor{blue}{+} & [0 \! \downarrow ] &
\textcolor{red}{+}\tfrac 23 & + & \textcolor{blue}{+} & ( * \! \downarrow ] &
\textcolor{red}{+}\tfrac 13 & - & \textcolor{blue}{+} & [0 \! \downarrow )
\\
&&&&
\textcolor{red}{+} & \varnothing & \textcolor{blue}{-} & [0 \! \uparrow ] &
\textcolor{red}{+}\tfrac 23 & + & \textcolor{blue}{-} & ( * \! \uparrow ] &
\textcolor{red}{+}\tfrac 13 & - & \textcolor{blue}{-} & [0 \! \uparrow )
\\
&&&& \textcolor{red}{-} & \varnothing & \textcolor{blue}{+} & [\uparrow \! 0] &
\textcolor{red}{-}\tfrac 23 & - & \textcolor{blue}{+} & [ \uparrow \! * ) &
\textcolor{red}{-}\frac 13 & + & \textcolor{blue}{+} & (\uparrow \! 0 ]
\\
\varnothing & \varnothing & \textcolor{blue}{-} & [\downarrow \! * ] & 
\textcolor{red}{-} & \varnothing & \textcolor{blue}{-} & [\downarrow \! 0 ] &
\textcolor{red}{-}\tfrac 23 & - & \textcolor{blue}{-} & [ \downarrow \! * ) &
\textcolor{red}{-}\tfrac 13 & + & \textcolor{blue}{-} & (\downarrow \! 0 ] 
\end{array}$
\end{center}
\end{table} 

(\textsc{b}) \textit{Color charge from $o_i = o_{0jk}$:}
In our model, we define color charge $o_i$, $i \in \{1,2,3\}$, to be the spatial version of electric charge $o_0$; see Table \ref{c s table}.
In particular, there are three color charges because space is three dimensional.
Choosing the unphysical orientation of $\tilde{M}$ to be $o_{\tilde{M}} = 1$, we have $o_i = o_{0jk}$.
Thus, the color charge $o_i$ arises from the vanishing of a worldvolume with unit tangent vectors $e_0, e_j, e_k$.
In this article \textit{we will consider the simplest compact surface without boundary, namely the sphere $S^2$, as the spatial slice of a worldvolume with color charge.}
All of the points on such a sphere $S^2 \subset \tilde{M}$ are identified, and therefore the sphere is a single point in spacetime $M$.
We call this geometric object a \textit{pointal sphere}.

Since the points on a pointal sphere $S^2 \subset \tilde{M}$ are identified in spacetime $M$, the (spatial) interior of $S^2$ is topologically a $3$-sphere, $S^3$, in $M$.
(Quite curiously, the isospin group $\operatorname{SU}(2)$ is also topologically $S^3$.
This suggests that our model may be able to reproduce the topological features of the skrymion model of nuclei directly from the geometry of spacetime itself, without the need for solitons in a pion field.) 

All three color charges are present on a pointal sphere $S^2$ since there are points on $S^2 \subset \tilde{M}$ with vanishing tangent subspaces 
\begin{equation*}
e_0 \wedge e_2 \wedge e_3, \ \ \ \ \ e_0 \wedge e_3 \wedge e_1, \ \ \ \ \ e_0 \wedge e_1 \wedge e_2,
\end{equation*} 
and thus with free orientations $o_{023}, o_{031}, o_{012} \in \{ \pm 1 \}$.
Furthermore, since the sphere is oriented and connected, we have $o_{023} = o_{031} = o_{012}$.
Whence, $o_1 = o_2 = o_3$.

A pointal sphere may be viewed as a pointon that has been blown up (like blowing up a balloon).
Let $s$ be the spin vector of a pointon, say $s = o_{jk}e_i$, and blow the pointon up into a pointal sphere $S^2$.
Since each point on $S^2$ in $\tilde{M}$ is identified in $M$, there is a copy of $s$ at each point on $S^2$.
However, $s$ is only unaffected by the blowing up at the point $p_i \in S^2 \subset \tilde{M}$ pierced by $o_ie_i$.
Indeed, $s$ is normal to $S^2$ at $p_i$.
The copies of $s$ along the equator orthogonal to $e_i$ lie in tangent spaces of $S^2$ and thus vanish completely in $M$.
The color $i$ of $S^2$ is therefore distinguished by $s$, and so we say $S^2$ has color charge $o_i = o_{0jk}$.

The copies of $s$ at the other two colors $j,k$---at the points $p_j,p_k \in S^2 \subset \tilde{M}$ pierced by $o_je_j$ and $o_ke_k$---lie along the equator and so vanish.
These color charges have no spatial component in the direction of $s$.
Consequently, they remain purely timelike, that is, \textit{they are electric charges}. 
The sum of the three spin vectors at $p_i,p_j,p_k$ coalesce to $s$ as we blow $S^2$ back down to a pointon. 
Thus, since a pointon has electric charge $\pm 1$, we may take the electric charge of the two dormant colors to be $o_i \tfrac 13$.
Pointal spheres therefore have electric charge $\pm \tfrac 23$.
In our model, pointal spheres represent up and anti-up quarks.

A pointal sphere with color $i$ therefore has four possible states,
\begin{equation*}
o_i = o_{0jk} \in \{ \pm 1 \} \ \ \ \ \text{ and } \ \ \ \ o_{jk} \in \{ +1 \! = \, \uparrow, \, -1 \! = \, \downarrow \},
\end{equation*}
where $o_{jk}$ is its spin in the $e_i$ direction. 
These states are given in Table \ref{first generation table}.

\begin{table}
\caption{A combinatorial derivation of the standard model particles from the free Dirac Lagrangian, with precisely one new massive spin-$2$ boson $x$ \cite{B2}. 
Subscripts $a,b \in \{\uparrow, \downarrow \}$, with $a \not = b$, denote spin states.}
\label{SM}
\begin{center}
    \begin{tabular}{rrrr}
\hline
 & $\gamma_{a} = [ a b, 00]$ & $Z_{a} = [ a a, 00]$ & $x_{ab} = [a 0, 0 b]$\\
 & $Z_0 = [ ab, ba]$ & $H = [aa, bb]$ & $x_0 = [ \downarrow \! {*}, {*} \! \downarrow]$ \\
\hdashline
 $\nu_e = [00, {*} \! \downarrow]$ & $\nu_{\mu} = [\downarrow \uparrow, {*} \! \downarrow]$ & $\nu_{\tau} = [\uparrow \uparrow, {*} \! \downarrow]$ & $\text{\footnotesize{$W_{0}^+$}} = [{*} \! \downarrow, 0 \! \uparrow]$\\
$\bar{\nu}_e = [00, \downarrow \! {*}]$ & $\bar{\nu}_{\mu} = [\uparrow \downarrow, \downarrow \! {*}]$ & $\bar{\nu}_{\tau} = [\uparrow \uparrow, \downarrow \! {*}]$ & $\text{\footnotesize{$W_{0}^-$}} = [\downarrow \! {*}, \uparrow \! 0]$\\
\hdashline
$e_{a} = [00, a 0]$ & $\mu_{a} = [b a, a 0]$ & $\tau_{a} = [bb, a0]$ & $\text{\footnotesize{$W^-_{a}$}} = [{*} \! \downarrow, a 0]$ \\
$\bar{e}_{a} = [00, 0b ]$ & $\bar{\mu}_{a} = [ba, 0b ]$ & $\bar{\tau}_{a} = [aa, 0b]$ & $\text{\footnotesize{$W_{a}^+$}} = [\downarrow \! {*}, 0b]$ \\
\hline
$u_{a} = (  00, {*} b]$ & $c_{a} = (ba, {*} b]$ 
& $t_{a} = (aa, {*} b]$ & \\
$\bar{u}_{a} = [00, a {*}  ) $ & $\bar{c}_{a} = [b a, a {*}  ) $ & $\bar{t}_{a} = [bb, a {*}  ) $ &
\\
\hdashline
$d_{a} = (  00, a 0 ]$ & $s_{a} = (ba, a0 ]$ & $b_{a} = (bb, a0 ]$ & \\
$\bar{d}_{a} = [00, 0 b )$ & $\bar{s}_{a} = [ab, 0 b)$ & $\bar{b}_{a} = [aa, 0b)$ &\\
\hline
    \end{tabular}
  \end{center}
\end{table}

(\textsc{c}) \textit{A consequence of the identification of electric charge with $o_0$:}
By identifying the timelike orientation $o_0$ with electric charge, \textit{spinor chirality is converted from handedness to electric charge}, where the chiral spinors
\begin{equation*}
 \psi^+ = \tfrac 12 (1 + \gamma^5)\psi \ \ \ \ \text{ and } \ \ \ \ \psi^- = \tfrac 12 (1- \gamma^5)\psi
\end{equation*}
have positive and negative charges, respectively \cite[Proposition 2.2]{B2}. 
Consequently, the mass terms of the chiral decomposition of the Dirac Lagrangian (\ref{modified Dirac}) become couplings between spinors of opposite electric charge,
\begin{equation} \label{chiral decomp}
\mathcal{L} = \bar{\psi}(i \slashed \partial - \omega) \psi = \bar{\psi}^+ i \slashed \partial \psi^+ + \bar{\psi}^- i \slashed \partial \psi^- + \omega \bar{\psi}^+ \psi^- + \omega \bar{\psi}^- \psi^+.
\end{equation}
Pairs of spinors that do not violate the Pauli exclusion principle are then bound together by the mass term $\omega \bar{\psi}\psi$ of $\mathcal{L}$; these pairs are called \textit{geoms} for `geometric atoms'. 
A geom consisting of an even (resp.\ odd) number of spinors is a boson (resp.\ fermion), and the electric charge of a geom is the sum of its constituent charges.

The set of all possible geoms yields precisely the standard model particles, minus the gluons and plus one new massive spin-$2$ boson; see Table \ref{SM}.
Notably, hidden within $\mathcal{L}$ is the correct spin, electric charge, and color charge of each fermion (in exactly three generations), electroweak boson, and Higgs boson \cite[Theorem 7.1]{B2}.
Gluons do not arise as freely propagating particles in the model.
The first generation fermions are the geoms with precisely one nonzero spinor, given in Table \ref{first generation table}.

In the table, the spinors $(a0], [0a)$ have electric charge $\mp \tfrac 13 = \pm \tfrac 23 \mp 1$. 
To realize these spinors geometrically, we need an object with the corresponding vanishing subspaces. 
The simplest such object is formed by the union of a pointal sphere with a pointon of opposite electric charge.
We call this object a \textit{pointal union}; in our model it represents the down and anti-down quarks.

\section{Spacetime metrics for pointons, pointal spheres, and their unions} \label{first section}

Consider a single timelike pointon at rest, and let $r_0$ be the radius of its spinor particle. 
We want to determine a metric $ds^2 = g_{\mu \nu} dx^{\mu} dx^{\nu}$ for the resulting spacetime.
Since time is stationary along the pointon's worldline, and assuming asymptotic flatness, the coordinate time vector $\partial_t := \partial/\partial t$ satisfies
\begin{equation*}
\partial_t |_{r = 0} = 0 \ \ \ \ \ \text{ and } \ \ \ \ \ \partial_t |_{r \to \infty} = e_t,
\end{equation*}
where $r$ is the coordinate distance to the pointon.
In contrast, spacelike vectors orthogonal to $\partial_t$ are unaffected by the presence of the pointon.
We may thus take the metric to be of the form
\begin{equation*}
ds^2 = -A(r) dt^2 + dr^2 + r^2 d \Omega^2,
\end{equation*}
where the pointon sits at the origin $r = 0$ in its rest frame, and
\begin{equation*}
\partial_t = A(r)^{1/2} e_t.
\end{equation*}

Recall that the $-g_{00}$ component of the Schwarzschild metric $ds^2$, namely $-g_{00} = 1 - r_s/r$, vanishes at the Schwarzschild radius $r = r_s$.
Thus, translating $r_s$ to the origin $r = 0$,
\begin{equation*}
-g_{00} = 1 - \frac{r_s}{r} \ \ \ \longrightarrow \ \ \ 1 - \frac{r_s}{r + r_s} = \frac{r}{r + r_s},
\end{equation*}
leads us to consider the ansatz
\begin{equation} \label{first ansatz}
A(r) = \left(1 - \frac{r_0}{r+r_0} \right)^{\! m} = \frac{r^m}{(r+r_0)^m},
\end{equation}
where the exponent $m$ determines how quickly $\partial_t$ approaches the unit vector $e_t$ as $r \to \infty$.

Similarly, we would like to determine a metric $ds^2$ for a single pointal sphere of radius $\eta \leq r_0$ at rest, as well as the union of a pointon and pointal sphere.
For brevity, we call their union a \textit{pointal union}. 
Such metrics are of the form
\begin{equation} \label{sphere metric}
ds^2 = -A(r)dt^2 + dr^2 + B(r)r^2 d \Omega^2,
\end{equation}
where $A(r)$, $B(r)$ are continuous functions satisfying 
\begin{equation} \label{constraints for metric}
A(\eta) = B(\eta) = 0, \ \ \ \ \ \ \lim_{r \to \infty} A(r) = \lim_{r \to \infty} B(r) = 1,
\end{equation}
and, for a pointal union, $A(0) = 0$.

The existence of a spinor particle uniquely determines the exponent $m$ in (\ref{first ansatz}) and the functions $A(r)$, $B(r)$: 

\begin{Proposition} \label{sphere metric theorem}
Consider a spacetime $M$ with a single spinor particle $\nu_e$, resp.\ pointon $e$, pointal sphere $u$, or pointal union $d$, centered at the origin $r = 0$ in its inertial frame. 
Let $r_0 > 0$ be the coordinate radius of the spinor particle's trajectory, and let $\eta \in (-r_0, r_0)$.
Then a metric for $M$ is
\begin{equation*}
\begin{array}{lcl}
ds^2 = -\left( \frac{r - \eta}{r + r_0} \right)^{\! 4} dt^2 + dr^2 + \left( \frac{r - \eta}{r + r_0} \right)^{\! 8\eta/(r_0 + \eta)} r^2 d\Omega^2 & \longleftrightarrow & \left\{ \begin{array}{ll} \nu_e & \text{if } \eta \in (-r_0,0) \\ e & \text{if } \eta = 0 \\ u & \text{if } \eta \in (0, r_0) \end{array} \right. 
\\[+4pt]
ds^2 = - \frac{r^2(r - \eta)^2}{(r + r_0)^4} dt^2 + dr^2 + \left( \frac{r - \eta}{r + r_0} \right)^{\! 4\eta/(r_0 + \eta)} r^2 d\Omega^2 & \longleftrightarrow & \ \ \ \, d \ \ \, \text{ if } \eta \in (0, r_0)
\end{array}
\end{equation*}
In the cases of a pointal sphere and pointal union, $\eta$ is the radius of the sphere.
\end{Proposition}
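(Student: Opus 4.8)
The plan is to turn this functional problem into one for a handful of exponents, and then to pin those down using the single physical requirement that the spacetime carry an on-shell spinor, i.e.\ admit a circular null geodesic at coordinate radius $r_0$.

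First I would isolate the free data. For a metric $ds^2 = -A(r)\,dt^2 + dr^2 + B(r)r^2\,d\Omega^2$, the conditions (\ref{constraints for metric}) fix $A$ and $B$ only at $r=\eta$ and $r=\infty$ (plus the extra value $A(0)=0$ for the pointal union), leaving the interpolating profiles undetermined. Extending the Schwarzschild-type ansatz (\ref{first ansatz}) by translating its zero from the origin to $r=\eta$ while keeping the length scale $r_0$ set by the spinor radius, I would take
\begin{equation*}
A = \chi^{\,m_A}, \qquad B = \chi^{\,m_B}, \qquad \chi := \frac{r-\eta}{r+r_0},
\end{equation*}
for $\nu_e$, $e$, and $u$; this makes $A(\eta)=B(\eta)=0$, $A(\infty)=B(\infty)=1$, and continuity on the physical domain $r\ge 0$ automatic, with the proviso that for $\nu_e$ the zero $r=\eta<0$ lies outside the domain. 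For the pointal union one must also have $A(0)=0$, so I would take $A=\xi^{\,j}\chi^{\,k}$ with $\xi:=\tfrac{r}{r+r_0}$ and $B=\chi^{\,m_B}$, where $j,k>0$ and $j+k$ equals the pointon exponent.

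The spinor-support condition then enters as follows. In the equatorial plane $\theta=\pi/2$, null geodesics possess the first integral $\dot r^{\,2} = E^2/A - L^2/(Br^2)$ with $E=A\dot t$ and $L=Br^2\dot\phi$ constant, so a circular photon orbit lies exactly where $A/(Br^2)$ has a critical point; requiring this critical point to be at $r_0$ gives
\begin{equation*}
\frac{A'(r_0)}{A(r_0)} - \frac{B'(r_0)}{B(r_0)} - \frac{2}{r_0} = 0 .
\end{equation*}
Substituting the ansatz turns this into one linear relation among the exponents. For the pointon $B\equiv 1$ (spacelike directions orthogonal to $\partial_t$ are unaffected), and the relation collapses to $m_A/(2r_0)=2/r_0$, so $m_A=4$ --- the exponent asserted in (\ref{first ansatz}). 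For $u$ and $\nu_e$ it reads $m_A-m_B=4(r_0-\eta)/(r_0+\eta)$; imposing in addition that the deformation of the time direction be the same as for a pointon --- equivalently, that the metric reduce to the pointon metric as the sphere collapses, $\eta\to 0$ --- forces $m_A=4$, whence $m_B=8\eta/(r_0+\eta)$. For the pointal union, treating its pointon and sphere loci symmetrically gives $j=k=2$ --- equivalently, the $d$-metric is the componentwise geometric mean of the $e$- and $u$-metrics, $A_d=\sqrt{A_eA_u}$, $B_d=\sqrt{B_eB_u}$ --- after which the critical-point relation yields $m_B=4\eta/(r_0+\eta)$. Assembling the exponents reproduces the two displayed metrics.

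It then remains to check directly that each displayed metric (i) has the stated form; (ii) satisfies every condition in (\ref{constraints for metric}), together with $A(0)=0$ in the $d$ case; (iii) is continuous and Lorentzian on $r\ge 0$, which is why $\eta\in(-r_0,r_0)$ is the natural parameter range; and (iv) has its photon sphere precisely at $r=r_0$, with the orbit there genuinely null, hence realizing the on-shell spinor trajectory of (\ref{E0 =}). The geodesic computation and the ansatz substitutions are routine; the one genuinely substantive step --- and the place where the asserted uniqueness needs care --- is fixing the second exponent in the $u$, $\nu_e$, and $d$ cases, that is, justifying $m_A=4$ and the balanced split $j=k=2$ rather than some other admissible pair.
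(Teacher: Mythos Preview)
Your approach is essentially the paper's: same ansatz, same circular-null-geodesic relation $A'/A - B'/B = 2/r$ at $r=r_0$ (you reach it via the effective potential $A/(Br^2)$, the paper via direct variation plus the null condition---equivalent), and the same $\eta\to 0$ reduction to fix the time exponent at $4$ for $\nu_e,e,u$.

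The one place you diverge is the pointal-union step, and you correctly flag it. Your symmetry/geometric-mean argument for $j=k=2$ is weaker than what the paper actually does: it invokes the Newtonian potential $\Phi=\ln\sqrt{-g_{tt}}$ and the additivity of forces, so that the union's potential is either the sum or the average of the pointon and sphere potentials. Both options already force $j=k$ (namely $j=k=4$ or $j=k=2$); then the requirement $B|_{\eta=0}=1$---equivalently, that the union collapse to a \emph{single} pointon (a fermion, not a two-pointon boson) as $\eta\to 0$---selects the average, i.e.\ $m=2$. Your more general ansatz with $j+k=4$ but $j\neq k$ is not ruled out by the photon-sphere relation together with $B|_{\eta=0}=1$ alone (the latter is automatic once $j+k=4$), so the potential argument is doing real work that ``treat the loci symmetrically'' does not. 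If you want to close the gap you identified, that is the missing ingredient.
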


Observe that the metrics are degenerate for $\eta \in [0, r_0)$ and nondegenerate for $\eta \in (-r_0,0)$.
Recall that charge in our model arises from the vanishing of tangent subspaces, that is, from metric degeneracy.
Thus, for the charged configurations $e, u, d$ we have $\eta \in [0,r_0)$, and for the neutral configuration $\nu_e$ we have $\eta \in (-r_0,0)$. 

\begin{proof}
The metrics for a pointon, pointal sphere, and their union are of the form (\ref{sphere metric}).
By varying the action $\int \! ds$ with respect to $r$ we obtain the geodesic equation
\begin{equation} \label{kf}
0 = \frac{d^2r}{dt^2} + \frac{A'}{2} - r(B + r B'/2) \left( \frac{d \theta}{d t} \right)^{\! 2} - r \sin^2 \theta ( B + r B'/2) \left( \frac{d \phi}{dt} \right)^{\! 2}.
\end{equation}
For a circular geodesic centered at the origin, we have $dr = 0$.
Furthermore, it suffices to suppose $\theta = \pi/2$, whence $d \theta = 0$.
Thus, (\ref{kf}) reduces to 
\begin{equation} \label{part 1}
r\left( \frac{d \phi}{dt} \right)^{\! 2} = \frac{A'}{2B + rB'}.
\end{equation}

Since the geodesic is null, we also have $0  = ds^2 = -Adt^2 + Br^2 d \phi^2$. 
Consequently,  
\begin{equation} \label{part 2}
r\left( \frac{d \phi}{dt} \right)^{\! 2} = \frac{A}{rB}.
\end{equation}
The relations (\ref{part 1}) and (\ref{part 2}) then together imply
\begin{equation} \label{equal}
\frac{2}{r} = \frac{A'(r)}{A(r)} - \frac{B'(r)}{B(r)}.
\end{equation}

\textit{(i) Pointon metric.}
First consider our ansatz (\ref{first ansatz}) for a single pointon.
To determine $m$, set $r = r_0$, $\eta = 0$, and $B = 1$ in (\ref{equal}); then
\begin{equation*}
\frac{2}{r_0} = \frac{A'(r_0)}{A(r_0)} = \frac{m}{2r_0}.
\end{equation*}
Thus, $m = 4$. 

\textit{(ii) Pointal sphere metric and spinor particle metric.}
Let $A(r)$, $B(r)$ be the metric coefficients in (\ref{sphere metric}) for a pointal sphere or spinor particle.
A pointal sphere becomes a pointon when its radius $\eta$ goes to zero.
Consequently, their metrics must coincide for $\eta = 0$:
\begin{equation*}
A(r)|_{\eta = 0} = \frac{r^4}{(r+r_0)^4}
\ \ \ \ \ \text{ and } \ \ \ \ \ 
B(r)|_{\eta = 0} = 1.
\end{equation*}
We therefore consider the ansatz
\begin{equation} \label{A and B}
A(r) = \left( \frac{r - \eta}{r + r_0} \right)^{\! 4} \ \ \ \ \ 
\text{ and } \ \ \ \ \ 
B(r) = \left( \frac{r - \eta}{r + r_0} \right)^{\! n},
\end{equation}
where $n = n(\eta)$ is a function of $\eta$ satisfying $n(0) = 0$.\footnote{In Remark \ref{bad ansatz} below we consider a similar ansatz that satisfies the constraints, but show that it does not allow for a pointal union to exist.}

To determine $n$, set $r = r_0$ in (\ref{equal}):
\begin{equation*}
\frac{2}{r_0} = \frac{A'(r_0)}{A(r_0)} - \frac{B'(r_0)}{B(r_0)} = (4 - n) \frac{r_0 + \eta}{2r_0(r_0 - \eta)}.
\end{equation*}
Solving for $n$ we find
\begin{equation*}
n = 8 \frac{\eta}{r_0+\eta}.
\end{equation*}

Note that $n$ is negative for $\eta \in (-r_0,0)$, but this does not produce a singularity in the metric since in this case the metric is nondegenerate for all $r \geq 0$ (other than at the coordinate singularity of the spherical metric $r^2 d\Omega^2$).
Nevertheless, the metric allows for a circular null geodesic with radius $r_0$.
It therefore supports a spinor particle, even though there is no pointon or pointal sphere present.

\textit{(iii) Pointal union metric.}
Finally, let $A(r)$, $B(r)$ be the metric coefficients in (\ref{sphere metric}) for a pointal union.

In terms of force, the Newtonian potential $\Phi$ corresponding to a metric $g_{ab}$ is
\begin{equation} \label{potential}
\Phi = \operatorname{ln}((-g_{tt})^{1/2}).
\end{equation}
Indeed, $\Phi$ satisfies $\nabla_{e_t}e_t = \vec{\nabla} \Phi$, where $\vec{\nabla}\Phi$ is the spatial gradient on the hypersurface orthogonal to $\partial t$ (e.g., \cite[Chapter 11]{F1}).
Furthermore, forces are additive, and thus so too are potentials.

Let $\Phi_1, \Phi_2$ be the respective gravitational potentials of a pointon and pointal sphere, and let $g^1_{ab}, g^2_{ab}$ be their spacetime metrics.
By the additivity of potentials, then, the potential for their union is either the sum or the average of $\Phi_1$ and $\Phi_2$,
\begin{equation*} \label{+}
\Phi = \Phi_1 + \Phi_2 \ \ \ \ \text{ or } \ \ \ \ \Phi = \tfrac 12 (\Phi_1 + \Phi_2).
\end{equation*}
By (\ref{potential}), the time component of the metric $g_{ab}$ of their union is then
\begin{equation*}
g_{tt}  = -e^{2 \Phi} = \left\{ \begin{array}{ll} 
-e^{2(\Phi_1 + \Phi_2)} = -e^{2\Phi_1}e^{2\Phi_2} = - g^1_{tt}g^2_{tt} & \text{ if } \Phi = \Phi_1 + \Phi_2\\
-e^{\Phi_1 + \Phi_2} = -e^{\Phi_1}e^{\Phi_2} = - \big( g^1_{tt}g^2_{tt} \big)^{1/2} & \text{ if } \Phi = \tfrac 12(\Phi_1 + \Phi_2)
\end{array} \right.
\end{equation*}
We therefore consider the ansatz
\begin{equation*} \label{A and B2}
A(r) = \left( \frac{r(r - \eta)}{(r + r_0)^2} \right)^{\! m} \ \ \ \ \ 
\text{ and } \ \ \ \ \ 
B(r) = \left( \frac{r - \eta}{r + r_0} \right)^{\! n}.
\end{equation*}
Observe that if the potentials $\Phi_1$, $\Phi_2$ are added, then $m = 4$, whereas if the potentials are averaged, then $m = 2$.

To determine $m$ and $n$, again set $r = r_0$ in (\ref{equal}):
\begin{equation*}
\frac{2}{r_0} = \frac{A'(r_0)}{A(r_0)} - \frac{B'(r_0)}{B(r_0)} = \frac{m}{r_0 - \eta} - \frac{n(r_0 + \eta)}{2r_0(r_0 - \eta)}.
\end{equation*}
Solving for $n$ we find
\begin{equation*}
n = \frac{2}{r_0 + \eta}\big( r_0(m-2) + 2\eta \big).
\end{equation*}
But $B(r)|_{\eta = 0} = 1$.
Whence $m = 2$, in agreement with the averaging of $\Phi_1$ and $\Phi_2$.
Consequently,
\begin{equation*}
n = 4 \frac{\eta}{r_0+\eta}.
\end{equation*}

We can also determine that $m = 2$ using the spinor structure of a pointal union:
a pointal union corresponds to a \textit{single} spinor. 
In particular, a pointal union becomes a single pointon when its radius $\eta$ goes to zero. 
Indeed, if a pointal union became two pointons when $\eta = 0$, then it would be the bound state of two fermions, and thus a boson, contrary to the fact that a pointal union is a fermion. 
Consequently, the metric of a pointal union must reduce to that of a single pointon when $\eta = 0$.
It follows that $m$ must be $2$. 
\end{proof}

The radius $r_0$ of the spinor particle is analogous to the photon sphere radius $r = 3r_s/2$ of the Schwarzschild metric.

\begin{Remark} \label{bad ansatz} \rm{
Similar to our pointal sphere ansatz (\ref{A and B}), the ansatz
\begin{equation} \label{A and B-2}
A(r) = \left( \frac{r - \eta}{r - \eta + r_0} \right)^{\! m} \ \ \ \ \ 
\text{ and } \ \ \ \ \ 
B(r) = \left( \frac{r - \eta}{r - \eta + r_0} \right)^{\! n},
\end{equation}
with $\lim_{\eta \to 0}n(\eta) = 0$, also satisfies the necessary constraints (\ref{constraints for metric}) for a pointal sphere.
However, this ansatz does not allow pointal unions to exist.
Indeed, following (\textit{ii}) in the proof of Proposition \ref{sphere metric theorem}, the ansatz (\ref{A and B-2}) yields the metric for a pointal sphere,
\begin{equation} \label{wrong pointal sphere metric}
ds^2 = -\left( \frac{r - \eta}{r - \eta + r_0} \right)^{\! 4} dt^2 + dr^2 + \left( \frac{r - \eta}{r - \eta +r_0} \right)^{\! 8 \eta/r_0}r^2 d \Omega^2.
\end{equation}

To determine the pointal union metric from (\ref{wrong pointal sphere metric}), we proceed as in (\textit{iii}) in the proof.
Since a pointal union reduces to a single pointon when the radius $\eta$ vanishes, we have $\Phi = \tfrac 12(\Phi_1 + \Phi_2)$.
Whence, (\ref{wrong pointal sphere metric}) yields the pointal union metric
\begin{equation*} \label{wrong pointal union metric}
ds^2 = -\left( \frac{r(r-\eta)}{(r+r_0)(r-\eta+r_0)}\right)^{\! 2} dt^2 + dr^2 + \left( \frac{r - \eta}{r - \eta +r_0} \right)^{\! -2 + \frac{\eta}{r_0}\left( 5 + \frac{\eta}{r_0} \right)}r^2 d \Omega^2.
\end{equation*}
But then $B(r) \not = 1$ at $\eta = 0$, a contradiction. 
}\end{Remark}

\begin{Remark} \rm{
The \textit{coordinate} centripetal acceleration $r(d\phi/dt)^2$ of the spinor particle appeared in (\ref{part 1}) and (\ref{part 2}).
Its angular frequency $\omega = d \phi/d \tau$ defines the pointon's rest energy $E_0$ by $E_0 = \hbar \omega = \hbar u/r_0 = mcu$, given in (\ref{E0 =}).
Observe that
\begin{equation*}
\omega = \frac{d \phi}{d\tau}(r_0) = \frac{d \phi}{dt} \frac{dt}{d \tau}(r_0) = A(r_0)^{-1/2} \frac{d \phi}{dt} (r_0),
\end{equation*}
where $A(r_0)^{1/2} = 1/4$.
Then, using either (\ref{part 1}) or (\ref{part 2}) with $B = 1$ we obtain\footnote{The factor of $c^2$ was suppressed in (\ref{sphere metric}).}
\begin{equation*}
\omega = A(r_0)^{-1/2} \frac{d \phi}{dt}(r_0) = \frac{c}{r_0}. 
\end{equation*}
Consequently, the relation $\omega r_0 = c$ holds, as it should for an on-shell pointon, that is, a pointon for which $u = c$.
}\end{Remark}

\begin{Proposition} \label{prop2}
For fixed $r_0 > 0$, the radius $\eta \in (0, r_0)$ of a pointal sphere, resp.\ pointal union, can only take the values
\begin{equation} \label{stable}
\eta \in \{ r_0/7, \ r_0/3, \ 3r_0/5 \}, \ \ \ \text{resp.} \ \eta = r_0/3.
\end{equation}
In contrast, if $\eta$ is negative, then $\eta$ can take any value in $(-r_0,0)$.
\end{Proposition}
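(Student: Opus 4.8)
The plan is to read off from Proposition~\ref{sphere metric theorem} the single place where $\eta$ can obstruct regularity of the metric, impose smoothness there, and then solve an elementary arithmetic equation. In both metrics the $dt^2$-coefficient is a ratio of polynomials in $r$ (the exponent $4$, resp.\ the factor $r^2(r-\eta)^2$, is harmless) and the $dr^2$-coefficient is $1$; the only ingredient carrying a non-integer exponent is the spherical coefficient
\begin{equation*}
B(r) = \left(\frac{r-\eta}{r+r_0}\right)^{\!n(\eta)}, \qquad n(\eta) = \frac{8\eta}{r_0+\eta}\ \ (\text{pointal sphere}), \qquad n(\eta) = \frac{4\eta}{r_0+\eta}\ \ (\text{pointal union}).
\end{equation*}
When $\eta \in (0,r_0)$ the base $\tfrac{r-\eta}{r+r_0}$ vanishes to first order at $r=\eta$, and $\{r=\eta\}$ is a genuine point of $\tilde M=\mathbb R^{1,3}$ (it is the pointal sphere $S^2$ itself). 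Writing $u=r-\eta$ one has $B = u^{n(\eta)}(u+\eta+r_0)^{-n(\eta)}$, with $\eta+r_0>0$ so that the second factor is real-analytic near $u=0$; hence $g_{ab}$ is a smooth, indeed real-analytic, tensor field on the physical region $r\ge\eta$ precisely when $n(\eta)\in\mathbb Z_{\ge 0}$. Since $\eta\neq 0$ gives $n(\eta)\neq 0$, this forces $n(\eta)$ to be a positive integer.

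The rest is routine. The map $\eta\mapsto n(\eta)$ is strictly increasing on $(0,r_0)$ — its derivative is $8r_0/(r_0+\eta)^2>0$, resp.\ $4r_0/(r_0+\eta)^2>0$ — and carries $(0,r_0)$ bijectively onto $(0,4)$ for a pointal sphere and onto $(0,2)$ for a pointal union. So the admissible exponents are $n(\eta)\in\{1,2,3\}$ in the sphere case and $n(\eta)=1$ in the union case; solving $8\eta/(r_0+\eta)=1,2,3$ yields $\eta=r_0/7,\,r_0/3,\,3r_0/5$, and solving $4\eta/(r_0+\eta)=1$ yields $\eta=r_0/3$, as claimed.

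For $\eta\in(-r_0,0)$ the picture changes: the base becomes $\tfrac{r-\eta}{r+r_0}=\tfrac{r+|\eta|}{r+r_0}$, which is strictly positive and real-analytic on $[0,\infty)$, bounded away from $0$. A strictly positive real-analytic function raised to any real power is again real-analytic, so $\bigl(\tfrac{r-\eta}{r+r_0}\bigr)^{n(\eta)}$ is smooth on all of $r\ge 0$ for \emph{every} exponent. No quantization condition arises, and $\eta$ may be taken to be any number in $(-r_0,0)$.

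The main obstacle — the one step that needs care — is the first: making precise which regularity condition to impose and on which region. In particular one must explain why smoothness is demanded only at $r=\eta$ and not the stronger requirement that $n$ be even, which would be needed to keep $B(r)>0$ (hence the Lorentzian signature) throughout $0\le r<\eta$ and which would leave only the single value $\eta=r_0/3$. The resolution is geometric: on the spacetime $M$ the sphere $\{r=\eta\}$ is a single point, and its interior $\{0\le r<\eta\}\subset\tilde M$ reappears as a separate $S^3$ region (cf.\ Section~\ref{Preliminary section}), so the chart $(t,r,\theta,\phi)$ need only cover $r\ge\eta$; there, odd integer exponents are admissible. Once this is pinned down the remaining verifications are immediate.
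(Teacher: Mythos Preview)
Your argument is correct, but it is not the paper's argument. The paper never invokes smoothness or analyticity at $r=\eta$; instead it keeps the chart on the full region $r\ge 0$ and observes that for $0\le r<\eta$ the base $(r-\eta)/(r+r_0)$ is strictly negative, so $B(r)$ fails to be \emph{real-valued} unless the exponent $n$ is an integer. That one sentence is the entire quantization mechanism in the paper; the rest is the same arithmetic you carry out. Your route, by contrast, discards the interior $\{r<\eta\}$ and forces $n\in\mathbb{Z}_{\ge 0}$ via regularity of $u\mapsto u^{n}$ at $u=0$. Both lead to the same list $n\in\{1,2,3\}$ (resp.\ $n=1$), and your discussion of why odd $n$ is permitted, and why no constraint appears for $\eta<0$, is sound.

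What each approach buys: the paper's condition (real-valuedness) is the weakest possible and is insensitive to how much differentiability one demands, but it commits you to regarding the metric as defined on $\{r<\eta\}$ even though the signature may flip there. Your condition is geometrically cleaner if one accepts that only $r\ge\eta$ is physical, but it needs the strong hypothesis of $C^\infty$ or real-analyticity; note that mere $C^2$ regularity (the minimum for the Einstein tensor to be classical) would allow every $n\in\{1\}\cup[2,4)$ for the pointal sphere and would \emph{not} yield the discrete set $\{1,2,3\}$. You should flag that your argument genuinely requires analyticity, not just finite differentiability.
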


\begin{proof}
The metric coefficients in a spacetime metric are real valued; in particular, the function $B(r)$ in the pointal sphere metric must be real valued.
Consequently, the exponent $n = 8 \eta/(r_0 + \eta)$ must be an integer since $(r - \eta)/(r+r_0) < 0$ for $r < \eta$.
Thus, since $\eta \in (0, r_0)$, we have $n \in \{1,2,3\}$.
Furthermore, $\eta = nr_0/(8-n)$, yielding (\ref{stable}).

Similarly, the function $B(r)$ in the pointal union metric must be real valued.
Thus, $n = 4 \eta/(r_0 + \eta)$ must be integer.
But in this case, $\eta \in (0, r_0)$ implies that $\eta$ can only take the value $\eta = r_0/3$.

Finally, if $\eta$ is negative, then $(r - \eta)/(r+r_0) > 0$ for all $r \geq 0$, and so there are no constraints on $n$.
\end{proof}

\newpage
\begin{Theorem} \label{sphere metric theorem2}
Consider a spacetime $M$ with a single spinor particle $\nu_e$, resp.\ pointon $e$, pointal sphere $u$, or pointal union $d$, centered at the origin $r = 0$ in its inertial frame. 
Let $r_0 > 0$ be the coordinate radius of the spinor particle's null geodesic trajectory, and let $\eta \in (-r_0, r_0)$.
Then a metric for $M$ is
\begin{equation*}
\begin{array}{lcl}
ds^2 = -\left( \frac{r - \eta}{r + r_0} \right)^{\! 4} dt^2 + dr^2 + \left( \frac{r - \eta}{r + r_0} \right)^{\! 8\eta/(r_0 + \eta)} r^2 d\Omega^2 & \longleftrightarrow & \left\{ \begin{array}{ll} \nu_e & \text{if } \eta \in (-r_0,0) \\ e & \text{if } \eta = 0 \\ u 
& \text{if } \eta \in r_0\left\{ \tfrac 17, \tfrac 13, \tfrac 35 \right\} \end{array} \right.
\\[+4pt]
ds^2 = - \frac{r^2(r - r_0/3)^2}{(r + r_0)^4} dt^2 + dr^2 + \frac{r - r_0/3}{r + r_0} \, r^2 d\Omega^2 & \longleftrightarrow & \ \ \ \, d \ \ \, \text{ if } \eta = r_0/3
\end{array}
\end{equation*}
\end{Theorem}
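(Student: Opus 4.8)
The plan is to obtain Theorem \ref{sphere metric theorem2} as a consolidation of Proposition \ref{sphere metric theorem} and Proposition \ref{prop2}, followed by a single substitution. Proposition \ref{sphere metric theorem} already exhibits, for every $\eta \in (-r_0, r_0)$, the two families of metrics appearing in the statement, with $\eta \in (-r_0,0)$ realizing $\nu_e$, $\eta = 0$ realizing $e$, and $\eta \in (0,r_0)$ realizing a pointal sphere $u$ (first family) or a pointal union $d$ (second family). So the only content beyond Proposition \ref{sphere metric theorem} is (i) cutting down the admissible positive values of $\eta$, and (ii) rewriting the $d$-metric at the distinguished value $\eta = r_0/3$.

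For (i) I would quote Proposition \ref{prop2}: since $(r-\eta)/(r+r_0) < 0$ on the interval $[0,\eta)$, reality of the angular metric coefficient $B(r)$ forces its exponent to be an integer. For a pointal sphere that exponent is $n = 8\eta/(r_0+\eta)$, which sweeps the open interval $(0,4)$ as $\eta$ runs over $(0,r_0)$; hence $n \in \{1,2,3\}$, and solving $\eta = nr_0/(8-n)$ gives $\eta \in r_0\{\tfrac17, \tfrac13, \tfrac35\}$. For a pointal union the exponent is instead $n = 4\eta/(r_0+\eta)$, which lies in $(0,2)$, so $n = 1$ is the only possibility and $\eta = r_0/3$ is forced. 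The endpoint value $\eta = r_0$ is excluded throughout, and this is precisely what rules out $n = 4$ for the sphere and $n = 2$ for the union; for negative $\eta$ the base $(r-\eta)/(r+r_0)$ is positive, no integrality constraint appears, and the neutral family is left untouched.

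For (ii) I would substitute $\eta = r_0/3$ into the second family of Proposition \ref{sphere metric theorem}. The $dt^2$ coefficient is $r^2(r-\eta)^2/(r+r_0)^4$, which becomes $r^2(r - r_0/3)^2/(r+r_0)^4$ verbatim, while the angular exponent collapses, $4\eta/(r_0+\eta) = (4r_0/3)/(4r_0/3) = 1$, so the angular term is $\tfrac{r - r_0/3}{r+r_0}\, r^2 d\Omega^2$; this is exactly the displayed $d$-metric. I expect no analytic obstacle here: the statement is a repackaging of results already in hand, and the only point that demands care is the elementary arithmetic of which integer exponents are attainable for $\eta$ in the \emph{open} interval $(0,r_0)$.
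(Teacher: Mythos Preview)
Your proposal is correct and follows exactly the paper's approach: the paper's proof is the single line ``Follows from Propositions \ref{sphere metric theorem} and \ref{prop2},'' and your write-up simply unpacks that reference together with the substitution $\eta = r_0/3$ in the pointal-union metric. Your added detail (the integrality argument and the arithmetic $4\eta/(r_0+\eta)=1$) is accurate and matches the content of those propositions.
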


\begin{proof}
Follows from Propositions \ref{sphere metric theorem} and \ref{prop2}.
\end{proof}

\section{A bare quark mass ratio and the bare electron mass} \label{energy density section}

As shown in Section \ref{first section}, spacetime metrics for a single pointon, pointal sphere, and pointal union are of the form
\begin{equation} \label{general metric}
ds^2 = -A(r) dt^2 + dr^2 + B(r) r^2 d\Omega^2, \ \ \ \ \ B(r) = \left( \frac{r- \eta}{r+r_0} \right)^{\! n}.
\end{equation}
In Proposition \ref{prop2} we found that for a fixed spinor radius $r_0$, the down quark (a pointal union) has a unique sphere radius $\eta = r_0/3$, whereas an up quark (a pointal sphere) has precisely three possible sphere radii, $\eta \in \{ r_0/7, \ r_0/3, \ 3r_0/5 \}$.
We denote these up quarks by $u_1, u_2, u_3$ respectively, where the subscript specifies the exponent $n \in \{1, 2, 3\}$ in $B(r)$ in the pointal sphere metric.

Although the up quark $u_1$ and down quark $d$ have different sphere radii, namely $\eta = r_0/7$ and $\eta = r_0/3$, we will nevertheless find that, among the three up quarks, $u_1$ is the closest to $d$ in multiple ways: 
\begin{itemize}
 \item $u_1$ and $d$ have equal exponents $n = 1$ in the metric coefficient $B(r)$ of (\ref{general metric}).
 \item $u_1$ and $d$ generate the closest energy densities by a significant amount (Theorem \ref{energy densities}).
 \item The $u_1$ radius $\eta = r_0/7$ appears in the acceleration $\nabla_{e_t}e_t$ resulting from $d$ (Proposition \ref{accelerations}).
\end{itemize}
We therefore consider $u_1$ to be the up quark that appears in hadrons at low energies.

\begin{Proposition} \label{rho proposition}
Given a metric of the form (\ref{general metric}), the energy density $\rho = \rho(r)$ obtained from the Einstein field equation is given by
\begin{equation} \label{rho's}
\kappa \rho = \kappa T^{tt} = g^{tt}\big( R^{t}_{t} - \tfrac 12 R \big) = -\frac{1}{A} \! \left( \frac 32 \frac{B''}{B} + \frac{3}{r} \frac{B'}{B} - \frac 14 \frac{B'^2}{B^2} - \frac{1}{r^2} \! \left( \frac{1}{B} - 1 \right) \! \right).
\end{equation} 
\end{Proposition}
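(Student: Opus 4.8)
The plan is to compute the Einstein tensor component directly from the metric ansatz (\ref{general metric}) and read off $\kappa T^{tt} = G^{tt}$. Since the metric is diagonal and static, the first step is to fix the orthonormal-versus-coordinate convention: here $T^{tt}$ is the coordinate component, so $g^{tt} = -1/A$, and I will compute $R^t_t$ and the scalar curvature $R$, then assemble $g^{tt}(R^t_t - \tfrac12 R)$. A clean route is to use the standard formulas for a metric of the form $-A\,dt^2 + dr^2 + C\,d\Omega^2$ with $C = B r^2$; this is a special case (with $g_{rr}=1$) of the general static spherically symmetric metric, so one can either quote known curvature expressions or compute the handful of nonvanishing Christoffel symbols $\Gamma^t_{tr}, \Gamma^r_{tt}, \Gamma^r_{\theta\theta}, \Gamma^r_{\phi\phi}, \Gamma^\theta_{r\theta}, \Gamma^\phi_{r\phi}, \Gamma^\theta_{\phi\phi}, \Gamma^\phi_{\theta\phi}$ and feed them into the Ricci formula.

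The key observation that makes the computation tractable is that the right-hand side of (\ref{rho's}) contains no $A'$ terms at all — only $A$ itself appears, as the overall prefactor $1/A$. This is exactly what one expects for the $tt$-Einstein equation of a static spherically symmetric spacetime with $g_{rr} = 1$: the combination $G^t_t$ (mixed index) depends only on the spatial metric, i.e.\ only on $B$ and $r$, not on $A$. So the cleanest organization is: (1) compute $G^t_t = R^t_t - \tfrac12 R$ and verify it equals $-(\tfrac32 B''/B + \tfrac3r B'/B - \tfrac14 B'^2/B^2 - \tfrac1{r^2}(1/B - 1))$; then (2) multiply by $g^{tt} = -1/A$ to convert to the contravariant component $G^{tt}$, which introduces the $1/A$ prefactor and matches the stated formula. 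Step (1) is where one sees that the $C = Br^2$ substitution produces the $B''/B$, $B'/B$, $B'^2/B^2$ and curvature-of-the-two-sphere ($1/r^2(1/B-1)$) terms: expanding $C'/C = B'/B + 2/r$ and $C''/C$ carefully is the bookkeeping that yields precisely these coefficients.

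The main obstacle is purely computational: correctly expanding the derivatives of $C = B r^2$ and collecting terms, since errors in the coefficients $\tfrac32$, $3$, $-\tfrac14$ are easy to make. In particular one must be careful that $(C'/C)' + \tfrac12(C'/C)^2$-type combinations (which is what the Ricci computation throws up) reorganize into the stated form; the cross terms between the $B'/B$ part and the $2/r$ part of $C'/C$ are the delicate point. I would double-check the final expression against the known Schwarzschild case: setting $A = B = 1 - r_s/r$ (so the metric is not quite of this form since $g_{rr}\neq 1$, hence better to check against $B=1$, $A$ arbitrary) gives $\rho = 0$, which the formula satisfies trivially since $B = 1 \Rightarrow B' = B'' = 0$ and $1/B - 1 = 0$. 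A more informative sanity check is to verify the formula reproduces the correct energy density for a known metric of exactly the form $-A\,dt^2 + dr^2 + Br^2 d\Omega^2$, or simply to re-derive $G^t_t$ from the general static spherically symmetric Einstein tensor by specializing $g_{rr}\to 1$. Once $G^t_t$ is confirmed, multiplying through by $-1/A$ and invoking the Einstein field equation $G^{ab} = \kappa T^{ab}$ with $T^{tt} = \rho$ (appropriate since $\partial_t$ is, up to the lapse $A^{1/2}$, the fluid four-velocity) completes the proof.
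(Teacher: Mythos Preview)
Your proposal is correct and follows essentially the same approach as the paper: compute the Christoffel symbols of the metric (\ref{general metric}), obtain the diagonal Ricci components and the scalar curvature, and assemble $g^{tt}(R^t_t - \tfrac12 R)$. The paper simply lists all of $R^t_t$, $R^r_r$, $R^\theta_\theta$, $R$ and lets the $A'$, $A''$ terms cancel in the combination, whereas you anticipate this cancellation in advance and propose organizing via $C = B r^2$; but this is a difference of bookkeeping, not of method.
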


\begin{proof}
The nonzero Christoffel symbols of (\ref{general metric}) are
\begin{equation} \label{Christoffel2}
\begin{split}
\Gamma^t_{rt} & = \frac{A'}{2A}, \ \ \ \ 
\Gamma^r_{tt} = \frac{A'}{2}, \\
\Gamma^r_{\theta \theta} & = -r(B + rB'/2), \ \ \ \ 
\Gamma^r_{\phi \phi} = -r \sin^2 \theta \left(B + r B'/2 \right), \\
\Gamma^{\theta}_{r \theta} & = \Gamma^{\phi}_{r \phi} = \frac 1r + \frac{B'}{2B}, \ \ \ \ 
\Gamma^{\theta}_{\phi \phi} = -\sin \theta \cos \theta, \ \ \ \
\Gamma^{\phi}_{\theta \phi} = \frac{\cos \theta}{\sin \theta}.
\end{split}
\end{equation}
The diagonal Ricci tensor components and Ricci scalar are thus
\begin{equation*}
\begin{split}
R^t_t & = -\frac{A''}{2A} + \frac{A'^2}{4A^2} - \frac{A'}{rA} - \frac{A'B'}{2AB}\\
R^r_r & = -\frac{A''}{2A} + \frac{A'^2}{4A} - \frac{B''}{B} + \frac{B'^2}{2B^2} - \frac{2B'}{rB}\\
R^{\theta}_{\theta} = R^{\phi}_{\phi} & = -\frac{1}{r^2} - \frac{2B'}{rB} - \frac{B''}{B} + \frac{1}{r^2B}  - \frac{A'}{2rA} - \frac{A'B'}{4rAB}\\
R = R^{\mu}_{\mu} & = - \! \left( \frac{A''}{A} + 3 \frac{B''}{B}\right) - \frac{2}{r} \! \left( \frac{A'}{A} + 3 \frac{B'}{B} \right) + \frac 12 \! \left( \frac{A'}{A} - \frac{B'}{B} \right)^{\! 2} + \frac{2}{r^2} \! \left( \frac{1}{B} - 1 \right)
\end{split}
\end{equation*}
The energy density $\kappa \rho = \kappa T^{tt} = g^{tt}\big( R^{t}_t - \tfrac 12 R \big)$ in the spatial hypersurface orthogonal to $e_t$ is therefore readily found to be (\ref{rho's}).
\end{proof}

\begin{Theorem} \label{energy densities}
The energy densities $\rho(r)$ at $r = r_0$ of an electron $e$ (= pointon), up quark $u_1, u_2, u_3$ (= pointal sphere), and down quark $d$ (= pointal union) are
\begin{equation*}
\begin{array}{r|lclclclcl}
& e & \ \ & u_1 & \ \ & u_2 & \ \ & u_3 & \ \ & d \\
\hline
\eta = & 0 && \frac{1}{7}r_0 && \frac{1}{3}r_0 && \frac{3}{5}r_0 && \frac{1}{3}r_0 \\
\rho(r_0) \kappa r_0^2 = & 0 && \frac{9604}{729} \approx 13.174 && 243 && \frac{88}{625} \approx .141 && 27
\end{array}
\end{equation*}
Thus, the ratio of the closest up quark and down quark energy densities at $r_0$ is
\begin{equation*}
\frac{\rho_{u_1}(r_0)}{\rho_d(r_0)} = \frac{9604}{729 \cdot 27} = \frac{9604}{19,683} \approx .48793.
\end{equation*}
Furthermore, the electron energy density is identically zero, $\rho_e(r) = 0$.
\end{Theorem}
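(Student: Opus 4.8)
The plan is to substitute the explicit metric coefficients of Theorem~\ref{sphere metric theorem2} into the energy-density formula~(\ref{rho's}) of Proposition~\ref{rho proposition} and evaluate at $r=r_0$. The key preliminary observation is that the entire right-hand side of~(\ref{rho's}) is built only from $B$, $B'$, $B''$, and $1/B-1$; no $A$-derivatives appear, only an overall factor $1/A$. For the electron ($=$ pointon, $\eta=0$), Theorem~\ref{sphere metric theorem2} gives $B(r)\equiv 1$, so $B'=B''=0$ and $1/B-1=0$, and hence $\kappa\rho(r)=0$ for every $r$, independently of $A$. This settles both $\rho_e\equiv 0$ and the entry $\rho(r_0)\kappa r_0^2=0$ for $e$ at once.

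For the up quarks $u_1,u_2,u_3$ and the down quark $d$ it is convenient to set $f(r):=\tfrac{r-\eta}{r+r_0}$, so that $f'(r)=\tfrac{r_0+\eta}{(r+r_0)^2}$ and $f''(r)=-\tfrac{2(r_0+\eta)}{(r+r_0)^3}$. In every case $B(r)=f(r)^n$, with $n=1,2,3$ for $u_1,u_2,u_3$ (where $\eta=\tfrac{nr_0}{8-n}$ by Proposition~\ref{prop2}) and $n=1$ for $d$ (where $\eta=r_0/3$). Using $\tfrac{B'}{B}=n\tfrac{f'}{f}$, $\tfrac{B''}{B}=n(n-1)\big(\tfrac{f'}{f}\big)^2+n\tfrac{f''}{f}$, and $\tfrac{B'^2}{B^2}=n^2\big(\tfrac{f'}{f}\big)^2$, the bracket in~(\ref{rho's}) becomes, after combining the first and third terms, a single expression whose $\big(\tfrac{f'}{f}\big)^2$-coefficient is $\tfrac{n(5n-6)}{4}$. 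Plugging in the evaluations $f(r_0)=\tfrac{r_0-\eta}{2r_0}$, $\tfrac{f'(r_0)}{f(r_0)}=\tfrac{r_0+\eta}{2r_0(r_0-\eta)}$, and $\tfrac{f''(r_0)}{f(r_0)}=-\tfrac1{r_0}\tfrac{r_0+\eta}{2r_0(r_0-\eta)}$, the bracket times $r_0^2$ reduces to a rational number depending only on $n$ and $\eta/r_0$.

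It then remains to divide by $A(r_0)$ and substitute the allowed values of $\eta$. For the up quarks $A(r)=f(r)^4$, so $A(r_0)=\big(\tfrac{r_0-\eta}{2r_0}\big)^4$; for the pointal union $A(r)=\tfrac{r^2(r-\eta)^2}{(r+r_0)^4}$, which is not a power of $f$, so one computes separately $A(r_0)=\big(\tfrac{r_0-\eta}{4r_0}\big)^2=\tfrac1{36}$. Carrying this out yields the tabulated values; in particular $\rho_{u_1}(r_0)\kappa r_0^2=\tfrac{9604}{729}$ and $\rho_d(r_0)\kappa r_0^2=27$. Since $\kappa$ and $r_0$ cancel in the quotient, $\tfrac{\rho_{u_1}(r_0)}{\rho_d(r_0)}=\tfrac{9604/729}{27}=\tfrac{9604}{19683}$, using $729\cdot 27=19683$.

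The argument is entirely computational, with no conceptual obstacle; the only thing that requires care is bookkeeping. The step most prone to error is the $d$ case, since there $A(r)$ is not of the uniform power form used for $e,u_1,u_2,u_3$ and its value at $r_0$ must be handled on its own, and one should also double-check the sign and the coefficient $\tfrac{n(5n-6)}{4}$ that appears on combining the $B''/B$ and $B'^2/B^2$ terms, because a slip there would contaminate all four nonzero table entries and the final ratio.
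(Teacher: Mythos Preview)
Your proposal is correct and follows essentially the same approach as the paper: substitute the metric coefficients from Theorem~\ref{sphere metric theorem2} into the energy-density formula~(\ref{rho's}) of Proposition~\ref{rho proposition}, observe that $B\equiv 1$ kills every term for the pointon, and evaluate at $r=r_0$ for the quarks. The only organizational difference is that the paper first records closed-form expressions for $\kappa\rho_{\text{sp}}(r)$ and $\kappa\rho_d(r)$ valid for all $r$ (its equation~(\ref{ny}), later reused for the graphs in Section~\ref{neutrino section}) and then specializes to $r=r_0$, whereas you evaluate at $r=r_0$ directly via the logarithmic-derivative substitution $B=f^n$; this is a matter of bookkeeping, not method.
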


\begin{proof}
For a pointon, $B = 1$.
Thus, the electron energy density vanishes, ${\rho_e(r) = 0}$, by Proposition \ref{rho proposition}.

Furthermore, it follows from (\ref{rho's}) and Theorem \ref{sphere metric theorem2} that the pointal sphere and pointal union energy densities are respectively
\begin{equation} \label{ny}
\begin{split}
\kappa \rho_{\text{sp}} & = \frac{8 \eta (r+r_0)^2}{(r- \eta)^6}\left( 3\frac{r_0}{r} \eta -\frac{17}{2} \eta - \frac{3}{2} r_0 \right) + \frac{(r+r_0)^{n+4} - (r-\eta)^n(r+r_0)^4}{r^2(r- \eta)^{n+4}}, \\
\kappa \rho_d & = \frac{4}{3}\frac{(r+r_0)^2r_0^2}{(r-r_0/3)^4r^3}\left(r_0 - \frac{8}{3}r \right) + \frac{(r+r_0)^5 - (r+r_0)^4(r-r_0/3)}{r^4 (r-r_0/3)^3}.
\end{split}
\end{equation} 
For $\rho_{\text{sp}}$, we have $n = 8\eta/(r_0 + \eta) \in (-4,0] \cup \{1,2,3\}$, by Theorem \ref{sphere metric theorem2}. 
The theorem then follows by setting $r = r_0$ in (\ref{ny}).
\end{proof}

\begin{Remark} \rm{
In Theorem \ref{energy densities} we found that the energy density of an electron is identically zero, $\rho_e(r) = 0$.
This fact can also be shown directly:
Let $\xi$ be a timelike unit tangent vector, and let $V^3 \subset M^4$ be the spacelike hypersurface normal to $\xi$. 
If $V^3$ is totally geodesic (that is, every geodesic of $V^3$ in the induced metric is also a geodesic of $M^4$), then the scalar curvature of $V^3$ is proportional to the energy density $\rho = T(\xi, \xi)$ by the relation $R_{V^3} = 2\kappa \rho$ (e.g., \cite[4-16]{F1}).

Let $M^4$ have pointon metric $ds^2 = -A(r)dt^2 + dr^2 + r^2d\Omega^2$ given in Proposition \ref{sphere metric theorem}, and let $\xi = e_t$.
Then $V^3 = \mathbb{R}^3$ with the induced metric $d\ell^2 = dr^2 + r^2d\Omega^2$.
Consequently, $2\kappa \rho_e = R_{V^3} = R_{\mathbb{R}^3} = 0$.
}\end{Remark}

\section{Dressed electron/bare quark mass ratios} \label{dressed electron section}

Recall the pointon metric given in Proposition \ref{sphere metric theorem},
\begin{equation} \label{pointon metric here}
ds^2 = \frac{-r^4}{(r + r_0)^4}dt^2 + dr^2 + r^2 d\Omega^2.
\end{equation}
In the following we show that the pointon metric is not a vacuum solution, $R_{ab} \not = 0$, in contrast to the Schwarzschild metric.

\begin{Theorem} \label{pointon stress-energy theorem}
By the Einstein field equation $R_{ab} - \tfrac 12 Rg_{ab} = \kappa T_{ab}$, the pointon metric is equivalent to a fluid with $4$-velocity $u = \partial_t$, energy density $\rho_e = 0$, and pressure and shear viscosity
\begin{equation} \label{pressure}
p_e = \frac{4}{\kappa (r+r_0)^2} \frac{r_0^2}{r^2}, \ \ \ \ \ (\tensor{\pi}{^{\mu}_{\nu}}) = \frac{1}{\kappa (r+r_0)^2}\operatorname{diag}\left( 0, \frac{r_0}{r}, -\frac{1}{2}\frac{r_0}{r},  -\frac{1}{2}\frac{r_0}{r}\right).
\end{equation}
\end{Theorem}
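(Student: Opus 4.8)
The plan is to read the matter content straight off the Einstein tensor of the pointon metric (\ref{pointon metric here}), for which $A(r) = r^4/(r+r_0)^4$ and $B(r) = 1$. First I would specialize the curvature formulas from the proof of Proposition \ref{rho proposition} to $B = 1$ (note that the term $\tfrac14 A'^2/A$ written there in $R^r_r$ should read $\tfrac14 A'^2/A^2$), so that $R^t_t = -\tfrac{A''}{2A} + \tfrac{A'^2}{4A^2} - \tfrac{A'}{rA}$, $R^r_r = -\tfrac{A''}{2A} + \tfrac{A'^2}{4A^2}$, and $R^\theta_\theta = R^\phi_\phi = -\tfrac{A'}{2rA}$. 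Using $A'/A = 4r_0/(r(r+r_0))$ and $A''/A = (A'/A)' + (A'/A)^2$, these become explicit rational functions of $r$; in particular one finds $R = -12r_0^2/(r^2(r+r_0)^2)$, and then $G^\mu_\nu = R^\mu_\nu - \tfrac12 R\,\delta^\mu_\nu$ satisfies $G^t_t = 0$ while $G^r_r$ and $G^\theta_\theta = G^\phi_\phi$ are nonzero.

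By the Einstein equation $\kappa T^\mu_\nu = G^\mu_\nu$, and because the metric is static with timelike direction $\partial_t$, the $tt$-equation gives $\kappa\rho_e = g^{tt}(R^t_t - \tfrac12 R) = 0$, in agreement with Theorem \ref{energy densities}. To exhibit the remaining stress as pressure plus shear-viscous stress I would use the canonical decomposition $T_{ab} = \rho_e\,u_a u_b + p_e\,h_{ab} + \pi_{ab}$, where $u^a$ is the unit vector along $\partial_t$, $h_{ab} = g_{ab} + u_a u_b$ is the projector onto the rest space, $p_e := \tfrac13 h^{ab}T_{ab}$ is the isotropic pressure, and $\pi_{ab} := T_{ab} - \rho_e u_a u_b - p_e h_{ab}$ is the traceless, $u$-transverse remainder. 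Concretely $\kappa p_e = \tfrac13\big(G^r_r + 2G^\theta_\theta\big)$ and $\pi^\mu_\nu = \kappa^{-1}G^\mu_\nu - p_e\,\mathrm{diag}(0,1,1,1)$; carrying out the arithmetic produces the expressions in (\ref{pressure}). I would then verify that the off-diagonal components of $T_{ab}$ vanish (so there is no heat flux and $u$ is a genuine fluid velocity), that $\pi^\mu_\nu$ is traceless and annihilated by $u$, and that in the orthonormal radial frame $\pi^\mu_\nu$ is proportional to $\mathrm{diag}(0,1,-\tfrac12,-\tfrac12)$, the tensor signature of a shear-viscous stress, which justifies the name; positivity of $p_e$ for $r > 0$ is immediate.

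The only real work is the algebraic simplification of $G^\mu_\nu$ — bookkeeping the $A'/A$ and $A''/A$ terms and tracking the cancellations that force $G^t_t = 0$ — and the one conceptual choice is the fluid decomposition itself: isolating the isotropic part as one third of the spatial trace and reading off the traceless remainder as the anisotropic (``shear-viscous'') stress. I do not expect a genuine obstacle here; once the Ricci tensors are in hand the statement is essentially a computation.
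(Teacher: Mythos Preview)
Your proposal is correct and follows essentially the same route as the paper: compute the Ricci and Einstein tensors of the pointon metric, observe $G^t{}_t=0$, and split the spatial part of $T^{\mu}{}_{\nu}$ into its isotropic trace (pressure) and traceless remainder (shear). The only cosmetic difference is that the paper recomputes the Christoffel symbols and Ricci components from scratch for the specific metric (\ref{pointon metric here}), whereas you more efficiently specialize the general $R^{\mu}{}_{\nu}$ formulas from the proof of Proposition~\ref{rho proposition}; you also correctly flag the typo $\tfrac14 A'^2/A \to \tfrac14 A'^2/A^2$ in the $R^r{}_r$ expression there.
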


\begin{proof}
The nonzero Christoffel symbols for the pointon metric (\ref{pointon metric here}) are, from (\ref{Christoffel2}),
\begin{equation*} \label{Christoffel}
\begin{split}
\Gamma^t_{rt} & = \frac{2}{(r+r_0)}\frac{r_0}{r}, \ \ \ \ \ \
\Gamma^r_{tt} = \frac{2r^3r_0}{(r+r_0)^5}, \ \ \ \ \ \
\Gamma^r_{\theta \theta} = -r, \ \ \ \ \ \
\Gamma^r_{\phi \phi} = -r \sin^2 \theta,\\
\Gamma^{\theta}_{r \theta} & = \Gamma^{\phi}_{r \phi} = \frac 1r, \ \ \ \ \ \
\Gamma^{\theta}_{\phi \phi} = -\sin \theta \cos \theta, \ \ \ \ \ \
\Gamma^{\phi}_{\theta \phi} = \cot \theta.
\end{split}
\end{equation*} 
The nonzero Ricci tensor components and Ricci scalar are thus
\begin{equation} \label{Ricci}
\begin{split}
R_{tt} & = \frac{6 r^2 r_0^2}{(r+r_0)^6}, \ \ \ \ \
R_{rr} = \frac{4}{(r+r_0)^2}\left(\frac{r_0}{r} - \frac{r_0^2}{2r^2} \right),\\
R_{\theta \theta} & = \frac{-2r_0}{r + r_0}, \ \ \ \ \
R_{\phi \phi} = \sin^2 \theta R_{\theta \theta}, \ \ \ \ \ R = \frac{-12}{(r + r_0)^2}\frac{r_0^2}{r^2}.
\end{split}
\end{equation}
Consequently, the Einstein tensor components $G_{\mu \nu} := R_{\mu \nu} - \tfrac 12 R g_{\mu \nu}$ are
\begin{equation} \label{G}
\begin{split}
G_{tt} & = 0, \ \ \ \ \
G_{rr} = \frac{4}{(r+r_0)^2}\left( \frac{r_0}{r} + \frac{r_0^2}{r^2} \right), \\
G_{\theta \theta} & = \frac{-4}{(r+r_0)^2}\left(  \tfrac 12 r_0 r - r_0^2 \right), \ \ \ \ \
G_{\phi \phi} = \sin^2 \theta G_{\theta \theta}.
\end{split}
\end{equation}

Since the vacuum equation $R_{ab} = 0$ does not hold around a pointon, its effect is equivalent to a fluid.
The stress-energy tensor for a fluid with 4-velocity $u^a$ is
\begin{equation*}
T^{ab} = \rho u^a u^b + p (g^{ab} + u^a u^b) + (u^a q^b + q^a u^b) + \pi^{ab},
\end{equation*}
where $\rho$  is the energy density, $p$ is the isotropic pressure, $q^{a}$ is the heat flux vector, and $\pi^{ab}$ is the viscous shear tensor. 
The viscous shear tensor $\pi^{ab}$ is symmetric, traceless, and orthogonal to $u^a$ in the sense that $\pi_{ab} u^a = 0$. 
In our case we take $u = e_t$.

Lowering an index $\tensor{T}{^{\mu}_{\nu}} = g_{\nu \lambda} T^{\mu \lambda} = \tfrac{1}{\kappa}g_{\nu \lambda}G^{\mu \lambda}$ and using (\ref{G}), we obtain
\begin{equation*}\label{T}
(\tensor{T}{^{\mu}_{\nu}}) = \frac{4}{\kappa (r+r_0)^2} \operatorname{diag}\left( 0, \left( \frac{r_0}{r} + \frac{r_0^2}{r^2} \right), \left(-\frac 12 \frac{r_0}{r} + \frac{r_0^2}{r^2} \right), \left( -\frac 12 \frac{r_0}{r}+ \frac{r_0^2}{r^2} \right) \right).
\end{equation*}
Furthermore, the pressure term of $\tensor{T}{^{\mu}_{\nu}}$ is of the form $p\left(\tensor{g}{^{\mu}_{\nu}}+(e_t)^{\mu}(e _t)_{\nu} \right) = p\tensor{\delta}{^{\mu}_{\nu}}$.
Thus, the pressure is readily seen to be (\ref{pressure}).
The remaining (spatial) part of $\tensor{T}{^{\mu}_{\nu}}$ is the shear viscosity $\pi_{ab}$ in (\ref{pressure}).
This tensor is clearly traceless, symmetric, and satisfies $\pi_{ab}(\partial_t)^{a} = 0$, as it should. 
\end{proof}

The `potential' $\sqrt{-g_{tt}}$ of a static metric $g_{ab}$ satisfies Levi-Cevita's generalized Poisson equation (e.g., \cite[3-8]{F1})
\begin{equation*} \label{Laplacian relations}
\vec{\nabla}^2 \sqrt{-g_{tt}} = -R^t_t \sqrt{-g_{tt}}.
\end{equation*}
Here, the Laplacian $\vec{\nabla}^2 = \nabla^{\alpha}\nabla_{\alpha}$ is with respect to the spatial hypersurface orthogonal to $\partial_t$, 
\begin{equation*}
\vec{\nabla}^2 f = g^{\alpha \beta} f_{; \alpha \beta} = g^{\alpha \beta} \left( \partial_{\beta} \partial_{\alpha} f - \Gamma^{\gamma}_{\beta \alpha} \partial_{\gamma} f \right).
\end{equation*}
For a perfect fluid ball or a dust cloud, the generalized Poisson equation is related to the energy density $\rho$ and pressure $p$ by (e.g., \cite[Chapter 3]{F1}, \cite[Section 11.1b,c]{F2})
\begin{equation} \label{generalized}
\vec{\nabla}^2 \sqrt{-g_{tt}} = \tfrac{\kappa}{2} ( \rho + 3 p) \sqrt{-g_{tt}}.
\end{equation}
In the following, we show that this relation also holds for pointons.

\begin{Proposition}
For a pointon we have
\begin{equation} \label{kfti}
\vec{\nabla}^2 \sqrt{-g_{tt}} = \tfrac{\kappa}{2} ( \rho_e + 3 p_e) \sqrt{-g_{tt}}.
\end{equation}
\end{Proposition}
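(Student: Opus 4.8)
The plan is to reduce the claimed identity to the Levi-Civita generalized Poisson equation $\vec{\nabla}^2\sqrt{-g_{tt}} = -R^t_t\sqrt{-g_{tt}}$, which holds for any static metric and in particular for the pointon metric (\ref{pointon metric here}). Since $\sqrt{-g_{tt}} = r^2/(r+r_0)^2$ is nonzero for $r>0$, equation (\ref{kfti}) is equivalent to the pointwise identity $-R^t_t = \tfrac{\kappa}{2}(\rho_e + 3p_e)$, so it suffices to establish that. The quickest route is to quote the values already computed: from (\ref{Ricci}) and $g^{tt} = -(r+r_0)^4/r^4$ we get $R^t_t = -6r_0^2/\big(r^2(r+r_0)^2\big)$, hence $-R^t_t\sqrt{-g_{tt}} = 6r_0^2/(r+r_0)^4$; on the other side, $\rho_e = 0$ and $p_e = 4r_0^2/\big(\kappa(r+r_0)^2 r^2\big)$ by Theorem \ref{pointon stress-energy theorem}, so $\tfrac{\kappa}{2}(\rho_e+3p_e)\sqrt{-g_{tt}} = \tfrac{\kappa}{2}\cdot\tfrac{12r_0^2}{\kappa(r+r_0)^2 r^2}\cdot\tfrac{r^2}{(r+r_0)^2} = 6r_0^2/(r+r_0)^4$, and the two sides agree.

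A more structural argument, which explains \emph{why} (\ref{kfti}) holds, proceeds from the fluid description of Theorem \ref{pointon stress-energy theorem}: the pointon spacetime is equivalent to a fluid with unit $4$-velocity $e_t$, vanishing heat flux (the mixed tensor $\tensor{T}{^\mu_\nu}$ there is diagonal, so there are no $u$--spatial cross terms), energy density $\rho_e$, pressure $p_e$, and traceless shear $\pi_{ab}$ with $\pi_{ab}(e_t)^a = 0$. For such a $T^{ab} = \rho_e\, e_t^a e_t^b + p_e(g^{ab} + e_t^a e_t^b) + \pi^{ab}$ one computes, using $e_t^a (e_t)_a = -1$ and the trace and orthogonality properties of $\pi$, that $T^t_t = -\rho_e$ and $T := \tensor{T}{^\mu_\mu} = -\rho_e + 3p_e$. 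Taking the trace of $R_{ab} - \tfrac12 Rg_{ab} = \kappa T_{ab}$ gives $R = -\kappa T$, whence $\tensor{R}{^\mu_\nu} = \kappa\tensor{T}{^\mu_\nu} - \tfrac{\kappa}{2}T\tensor{\delta}{^\mu_\nu}$; evaluating the $(t,t)$ component yields $R^t_t = \kappa(-\rho_e) - \tfrac{\kappa}{2}(-\rho_e + 3p_e) = -\tfrac{\kappa}{2}(\rho_e + 3p_e)$. Substituting into the Levi-Civita equation produces (\ref{kfti}). One can also bypass Theorem \ref{pointon stress-energy theorem} entirely and verify (\ref{kfti}) by brute force: the hypersurface orthogonal to $\partial_t$ carries the flat metric $dr^2 + r^2 d\Omega^2$, so $\vec{\nabla}^2$ acts on the radial function $\sqrt{-g_{tt}}$ as $\tfrac{1}{r^2}\partial_r\!\big(r^2\partial_r(\cdot)\big)$, which gives $6r_0^2/(r+r_0)^4$, matching the right-hand side.

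There is essentially no obstacle: the substantive content is already packaged in Theorem \ref{pointon stress-energy theorem} and the Levi-Civita identity, and the rest is a one-line algebraic check. The only point worth a remark is the identification $T^t_t = -\rho_e$ (equivalently, that the heat-flux term drops out), which is immediate from the diagonal form of $(\tensor{T}{^\mu_\nu})$ displayed in the proof of Theorem \ref{pointon stress-energy theorem}.
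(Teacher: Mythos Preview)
Your proposal is correct and, in its first and third routes, essentially coincides with the paper's proof: the paper computes $\vec{\nabla}^2\sqrt{-g_{tt}}$ directly (after deriving the Laplacian formula for the general metric (\ref{sphere metric}), which for the pointon with $B=1$ reduces to your flat-space $\tfrac{1}{r^2}\partial_r(r^2\partial_r(\cdot))$), verifies it equals $-R^t_t\sqrt{-g_{tt}}$, and then matches this to $\tfrac{\kappa}{2}(\rho_e+3p_e)$ using the values from Theorem~\ref{pointon stress-energy theorem}. Your second, structural argument via the trace-reversed Einstein equation $\tensor{R}{^\mu_\nu}=\kappa\tensor{T}{^\mu_\nu}-\tfrac{\kappa}{2}T\tensor{\delta}{^\mu_\nu}$ is a nice addition not in the paper: it shows that $-R^t_t=\tfrac{\kappa}{2}(\rho+3p)$ holds for \emph{any} static fluid with diagonal stress tensor and traceless shear, so the pointon case is not an accident but an instance of the general perfect-fluid identity (\ref{generalized}) surviving the presence of anisotropic shear.
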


\begin{proof}
For a metric of the form $ds^2 = -A(r)dt^2 + dr^2 + B(r)r^2d\Omega^2$, the generalized Poisson equation is given by
\begin{equation} \label{sngu}
\vec{\nabla}^2 \sqrt{-g_{tt}} =  \partial^2_r \sqrt{-g_{tt}} + \left( \frac{2}{r} + \frac{B'}{B} \right) \partial_r \sqrt{-g_{tt}}.
\end{equation}
Indeed, since $g_{ab}$ is diagonal, we have
\begin{equation*}
\vec{\nabla}^2 \sqrt{-g_{tt}} = g^{\alpha \alpha} \left( \partial^2_{\alpha} \sqrt{-g_{tt}} - \Gamma^{\gamma}_{\alpha \alpha} \partial_{\gamma} \sqrt{-g_{tt}} \right).
\end{equation*}
Thus, since $g^{\theta \theta} = (Br^2)^{-1}$, $g^{\phi \phi} = (Br^2 \sin^2 \theta)^{-1}$, the Christoffel symbols (\ref{Christoffel2}) yield 
\begin{equation*}
\begin{split}
\vec{\nabla}^2 \sqrt{-g_{tt}} & = g^{rr} \partial^2_r \sqrt{-g_{tt}} - \left( g^{\theta \theta} \Gamma^r_{\theta \theta} + g^{\phi \phi} \Gamma^r_{\phi \phi} \right) \partial_r \sqrt{-g_{tt}}\\
& = \partial^2_r \sqrt{-g_{tt}} + \left( \frac{2}{r} + \frac{B'}{B} \right) \partial_r \sqrt{-g_{tt}}.
\end{split}
\end{equation*}

It follows from (\ref{sngu}) that the pointon metric (\ref{pointon metric here}) satisfies
\begin{equation*}
\sqrt{-g^{tt}}\vec{\nabla}^2 \sqrt{-g_{tt}} = \frac{6}{(r+r_0)^2}\frac{r_0^2}{r^2} = -R^t_t.
\end{equation*}
The right-hand side is related to the energy density $\rho_e$ and pressure $p_e$ obtained from the Einstein field equation in (\ref{pressure}) by
\begin{equation*}
\rho_e = 0 \ \ \ \ \text{ and } \ \ \ \ p_e = \frac{2}{3\kappa} \frac{6}{(r+r_0)^2}\frac{r_0^2}{r^2}.
\end{equation*}
Therefore the relation (\ref{kfti}) holds.
\end{proof}

Denote by $\tilde{\rho} := \rho + 3p$ the total energy density due to both the energy density $\rho$ and pressure $p$ occurring in the generalized Poisson equation,
\begin{equation*}
\vec{\nabla}^2 \sqrt{-g_{tt}} = -R^t_t \sqrt{-g_{tt}} = \tfrac{\kappa}{2}( \rho + 3p )\sqrt{-g_{tt}} = \tfrac{\kappa}{2} \tilde{\rho} \sqrt{-g_{tt}}.
\end{equation*}
In particular, 
\begin{equation} \label{rho^*}
-R^t_t = \tfrac{\kappa}{2}(\rho + 3p) = \tfrac{\kappa}{2}\tilde{\rho}.
\end{equation}

In Section \ref{energy density section}, we identified the energy density $\rho$ with the bare mass (density) of the particle.
Similarly, we may identify the pressure $p$ with the mass that results from the polarization of the vacuum, that is, from the particle's covering of virtual particles. 
Under these identifications, the total energy density $\tilde{\rho}$ may be identified with the particle's dressed mass through the generalized Poisson equation (\ref{generalized}): 
\begin{center}
\begin{tabular}{|rcl|}
\hline
internal geometry + EFE & & vacuum polarization in QFT\\
\hline
energy density $\rho$ & $\longleftrightarrow$ & bare mass\\
isotropic pressure $p$ & $\longleftrightarrow$ & mass from virtual particle covering\\
total energy density $\tilde{\rho} = \rho + 3p$ & $\longleftrightarrow$ & dressed mass\\
\hline
\end{tabular}
\end{center}

\begin{Remark} \rm{
The notion of a virtual covering (or vacuum polarization) in quantum field theory is not well defined mathematically.
In our framework we have replaced the virtual covering with the well-defined notion of isotropic fluid pressure.
This fluid, in turn, is equivalent to a particular spacetime geometry which is determined by the Einstein field equation.
}\end{Remark}

\begin{Theorem} \label{dressed electron mass theorem}
An electron (pointon) has total energy density
\begin{equation*}
\tilde{\rho}_e(r) = -\tfrac{2}{\kappa}R^t_t = \frac{12}{\kappa(r+r_0)^2}\frac{r_0^2}{r^2}.
\end{equation*}
\end{Theorem}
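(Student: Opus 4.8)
The plan is to obtain the claimed formula as an immediate consequence of the identities already established for the pointon metric, rather than through any new computation. Recall from (\ref{rho^*}) that for any static metric of the relevant form the generalized Poisson equation gives $-R^t_t = \tfrac{\kappa}{2}(\rho + 3p) = \tfrac{\kappa}{2}\tilde{\rho}$, so that $\tilde{\rho}_e = -\tfrac{2}{\kappa}R^t_t$ holds by definition of $\tilde{\rho}$. It therefore suffices to evaluate $R^t_t$ for the pointon metric (\ref{pointon metric here}).

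First I would raise an index in the Ricci tensor component computed in (\ref{Ricci}). Since $g_{tt} = -A(r) = -r^4/(r+r_0)^4$, we have $g^{tt} = -(r+r_0)^4/r^4$, and hence
\begin{equation*}
R^t_t = g^{tt} R_{tt} = -\frac{(r+r_0)^4}{r^4}\cdot\frac{6r^2r_0^2}{(r+r_0)^6} = -\frac{6r_0^2}{r^2(r+r_0)^2}.
\end{equation*}
Multiplying by $-2/\kappa$ then yields $\tilde{\rho}_e(r) = \dfrac{12}{\kappa(r+r_0)^2}\dfrac{r_0^2}{r^2}$, as claimed.

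As a cross-check, I would also verify the result via the fluid interpretation of Theorem \ref{pointon stress-energy theorem}: there $\rho_e = 0$ and $p_e = \dfrac{4}{\kappa(r+r_0)^2}\dfrac{r_0^2}{r^2}$, so $\tilde{\rho}_e = \rho_e + 3p_e = 3p_e = \dfrac{12}{\kappa(r+r_0)^2}\dfrac{r_0^2}{r^2}$, in agreement with the direct curvature computation. There is essentially no obstacle here: the only thing to be careful about is the index placement and the sign coming from the Lorentzian signature when passing between $R_{tt}$, $R^t_t$, and $\tilde\rho$, so I would make that bookkeeping explicit.
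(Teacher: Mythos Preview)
Your proposal is correct and follows essentially the same route as the paper, which simply cites (\ref{Ricci}) and (\ref{rho^*}); you have merely made the index raising explicit and added the consistency check via $\rho_e + 3p_e$ from Theorem \ref{pointon stress-energy theorem}.
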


\begin{proof}
Follows from (\ref{Ricci}) and (\ref{rho^*}).
\end{proof}

\begin{Corollary} \label{main corollary1}
The total energy density of the electron at $r = r_0$ is
\begin{equation*}
\tilde{\rho}_e(r_0) = 3 \kappa^{-1}r_0^{-2}.
\end{equation*}
Thus, the ratios of the dressed electron mass to the bare quark masses are
\begin{equation*}
\frac{\tilde{\rho}_e(r_0)}{\rho_{u_1}(r_0)} = \frac{2187}{9604} \approx .2277 \ \ \ \ \ \ \text{ and } \ \ \ \ \ \ \frac{\tilde{\rho}_e(r_0)}{\rho_d(r_0)} = \frac{1}{9} \approx .1111.
\end{equation*}
Consequently, given the dressed electron mass as $\tilde{m}_e = .511 \operatorname{MeV}$, we derive the bare up and down quark masses:
\begin{equation*}
m_{u_1} \approx 2.2440 \operatorname{MeV} \ \ \ \ \ \ \text{ and } \ \ \ \ \ \ m_d \approx 4.599 \operatorname{MeV}.
\end{equation*}
\end{Corollary}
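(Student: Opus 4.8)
The plan is to chain together the numerical facts already established earlier in the excerpt. First I would invoke Theorem \ref{dressed electron mass theorem}, which gives $\tilde{\rho}_e(r) = 12\kappa^{-1}(r+r_0)^{-2}r_0^{-2}r_0^2$; evaluating at $r = r_0$ collapses the $(r+r_0)^2 = 4r_0^2$ and $r_0^2/r^2 = 1$ factors, yielding $\tilde{\rho}_e(r_0) = 12\kappa^{-1}(4r_0^2)^{-1} = 3\kappa^{-1}r_0^{-2}$. That is the first displayed claim, essentially immediate.

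Next I would form the two ratios. From Theorem \ref{energy densities} we have $\rho_{u_1}(r_0)\kappa r_0^2 = 9604/729$ and $\rho_d(r_0)\kappa r_0^2 = 27$. Multiplying $\tilde{\rho}_e(r_0)$ by $\kappa r_0^2$ gives $3$, so $\tilde{\rho}_e(r_0)/\rho_{u_1}(r_0) = 3/(9604/729) = 3\cdot 729/9604 = 2187/9604$ and $\tilde{\rho}_e(r_0)/\rho_d(r_0) = 3/27 = 1/9$. These are exact rational identities; the decimal approximations $2187/9604 \approx .2277$ and $1/9 \approx .1111$ are routine division. The key point to state clearly is that the overall factors of $\kappa$ and $r_0^2$ cancel in the ratios, so the values are pure numbers independent of the (undetermined) spinor radius $r_0$.

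Finally, for the quark mass estimates I would use the dictionary established just before the corollary: energy density at $r_0$ is identified with bare mass, and $\tilde{\rho}$ with dressed mass. Under this identification the mass ratios equal the energy-density ratios, so $m_{u_1} = \tilde{m}_e \cdot (9604/2187)$ and $m_d = \tilde{m}_e \cdot 9$. Substituting $\tilde{m}_e = .511\,\mathrm{MeV}$ gives $m_{u_1} = .511 \cdot 9604/2187 \approx 2.2440\,\mathrm{MeV}$ and $m_d = .511/9 \approx .05678\,\mathrm{MeV}$ --- wait, that is wrong; the ratio is $\tilde\rho_e/\rho_d = 1/9$, so $m_d = 9\tilde m_e$ would give $4.599\,\mathrm{MeV}$, and indeed $.511 \cdot 9 = 4.599$, confirming the stated value. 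I should be careful to write the inversions correctly: since $\tilde\rho_e(r_0)/\rho_d(r_0) = 1/9$ we get $m_d = 9\,\tilde m_e$, and since $\tilde\rho_e(r_0)/\rho_{u_1}(r_0) = 2187/9604$ we get $m_{u_1} = (9604/2187)\,\tilde m_e$.

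There is essentially no obstacle here: the corollary is a bookkeeping consequence of Theorems \ref{energy densities} and \ref{dressed electron mass theorem} together with the bare/dressed mass identification. The only thing requiring minimal care is getting the direction of each ratio right when converting to masses, and reporting the arithmetic to the stated precision.

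\begin{proof}
By Theorem \ref{dressed electron mass theorem}, $\tilde{\rho}_e(r) = 12\kappa^{-1}(r+r_0)^{-2}r_0^{-2}r^{-2}r_0^2 = 12\kappa^{-1}(r+r_0)^{-2}r_0^2r^{-2}$. Setting $r = r_0$ gives $(r+r_0)^2 = 4r_0^2$ and $r_0^2/r^2 = 1$, so
\begin{equation*}
\tilde{\rho}_e(r_0) = \frac{12}{\kappa \cdot 4r_0^2} = \frac{3}{\kappa r_0^2} = 3\kappa^{-1}r_0^{-2}.
\end{equation*}
Hence $\tilde{\rho}_e(r_0)\kappa r_0^2 = 3$. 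By Theorem \ref{energy densities}, $\rho_{u_1}(r_0)\kappa r_0^2 = 9604/729$ and $\rho_d(r_0)\kappa r_0^2 = 27$. The factor $\kappa r_0^2$ therefore cancels in the ratios, giving
\begin{equation*}
\frac{\tilde{\rho}_e(r_0)}{\rho_{u_1}(r_0)} = \frac{3}{9604/729} = \frac{3 \cdot 729}{9604} = \frac{2187}{9604} \approx .2277, \qquad \frac{\tilde{\rho}_e(r_0)}{\rho_d(r_0)} = \frac{3}{27} = \frac{1}{9} \approx .1111.
\end{equation*}
Under the identification of $\rho(r_0)$ with bare mass and $\tilde{\rho}(r_0)$ with dressed mass, the mass ratios equal these energy-density ratios, so $m_{u_1} = (9604/2187)\,\tilde{m}_e$ and $m_d = 9\,\tilde{m}_e$. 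Substituting $\tilde{m}_e = .511\operatorname{MeV}$ yields $m_{u_1} = .511 \cdot 9604/2187 \approx 2.2440\operatorname{MeV}$ and $m_d = 9 \cdot .511 \approx 4.599\operatorname{MeV}$.
\end{proof}
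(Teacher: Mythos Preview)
Your proof is correct and takes the same approach as the paper, which simply states that the corollary follows from Theorems \ref{energy densities} and \ref{dressed electron mass theorem}; you have merely spelled out the arithmetic that the paper leaves implicit. The only cosmetic issue is the mid-plan self-correction about the direction of the $m_d$ ratio, but the final proof environment gets it right.
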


\begin{proof}
Follows from Theorems \ref{energy densities} and \ref{dressed electron mass theorem}.
\end{proof}

Fix $p \in M$ and let $\{e_0,e_1,e_2,e_3 \}$ be an orthonormal basis for $M_p$. 
Denote by $\varepsilon_i = \left\langle e_i, e_i \right\rangle = \pm 1$ the indicator of $e_i$.
We want to determine the sectional curvatures 
\begin{equation*}
K(e_i \wedge e_j) = \varepsilon_i \varepsilon_j \left\langle R(e_i,e_j)e_j, e_i \right\rangle = \varepsilon_i \varepsilon_j R_{ijij}
\end{equation*}
of the planes $e_i \wedge e_j \subset M_p$, that is, the Gaussian curvatures at $p$ of the $2$-dimensional manifolds formed from the geodesics through $p$ which are tangent to $e_i \wedge e_j$. 

\begin{Proposition}
The sectional curvatures of the pointon metric (\ref{pointon metric here}) are 
\begin{align*} \label{sectional curvatures}
\begin{split}
K(e_t \wedge e_r) & = K(e_t \wedge e_{\theta}) = K(e_t \wedge e_{\phi}) = \frac{-2}{(r+r_0)^2}\frac{r_0^2}{r^2},\\
K(e_{\theta} \wedge e_{\phi}) & = \frac{-4}{(r+r_0)^2}\frac{r_0}{r},\\
K(e_r \wedge e_{\theta}) & = K(e_r \wedge e_{\phi}) = \frac{2}{(r+r_0)^2}\frac{r_0}{r}.
\end{split}
\end{align*}
\end{Proposition}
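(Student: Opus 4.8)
The plan is to obtain the sectional curvatures from the full Riemann tensor of the pointon metric (\ref{pointon metric here}), built from the Christoffel symbols already recorded in (\ref{Christoffel}). First I would note that, since this metric is static and spherically symmetric with flat spatial part $dr^2+r^2d\Omega^2$, the only coordinate Riemann components $R^{\mu}{}_{\nu\rho\sigma}$ that need not vanish by symmetry are those whose index multiset is a permutation of one of $\{t,r,t,r\}$, $\{t,\theta,t,\theta\}$, $\{t,\phi,t,\phi\}$, $\{r,\theta,r,\theta\}$, $\{r,\phi,r,\phi\}$, $\{\theta,\phi,\theta,\phi\}$, and the $\theta\leftrightarrow\phi$ exchange symmetry identifies the second with the third and the fourth with the fifth, leaving exactly the four independent quantities appearing in the statement. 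The corresponding orthonormal frame is $e_t=(-g_{tt})^{-1/2}\partial_t$, $e_r=\partial_r$, $e_\theta=r^{-1}\partial_\theta$, $e_\phi=(r\sin\theta)^{-1}\partial_\phi$, with indicators $\varepsilon_t=-1$ and $\varepsilon_r=\varepsilon_\theta=\varepsilon_\phi=+1$.

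Next I would compute the four needed components from $R^{\rho}{}_{\sigma\mu\nu}=\partial_\mu\Gamma^{\rho}_{\nu\sigma}-\partial_\nu\Gamma^{\rho}_{\mu\sigma}+\Gamma^{\rho}_{\mu\lambda}\Gamma^{\lambda}_{\nu\sigma}-\Gamma^{\rho}_{\nu\lambda}\Gamma^{\lambda}_{\mu\sigma}$ using (\ref{Christoffel}); this requires only differentiating rational functions of $r$ (for the $t$--$r$, $t$--$\theta$, and $r$--$\theta$ blocks) and trigonometric functions of $\theta$ (for the $\theta$--$\phi$ block). Lowering the first index with the metric and dividing by the appropriate products of $|g_{tt}|=r^4/(r+r_0)^4$, $g_{\theta\theta}=r^2$, and $g_{\phi\phi}=r^2\sin^2\theta$ converts each component into its orthonormal-frame value $R_{\hat i\hat j\hat i\hat j}=\langle R(e_i,e_j)e_j,e_i\rangle$; then $K(e_i\wedge e_j)=\varepsilon_i\varepsilon_j R_{\hat i\hat j\hat i\hat j}$ gives the claimed values.

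As a check I would confirm that these numbers reproduce the curvature data already in hand. In an orthonormal frame one has $\varepsilon_iR_{\hat i\hat i}=\sum_{j\ne i}K(e_i\wedge e_j)$ for each $i$, together with $R=2\sum_{i<j}K(e_i\wedge e_j)$; since $R_{\hat t\hat t}=g^{tt}R_{tt}$, $R_{\hat r\hat r}=R_{rr}$, and $R_{\hat\theta\hat\theta}=R_{\hat\phi\hat\phi}=r^{-2}R_{\theta\theta}$ can be read from (\ref{Ricci}) (and are consistent with the Einstein tensor (\ref{G})), these become linear identities among the displayed sectional curvatures, which I would verify hold. The step I expect to be most delicate is the bookkeeping: one must keep the index ordering and overall sign convention for the Riemann tensor exactly the same as the one under which (\ref{Ricci}) was derived, and track carefully the powers of the metric coefficients that enter when passing from the coordinate basis to the orthonormal frame. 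With a convention fixed, the remaining work is a short differentiation.
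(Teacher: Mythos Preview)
Your approach is correct but differs from the paper's. You compute the four independent Riemann components directly from the Christoffel symbols (\ref{Christoffel}) and then pass to the orthonormal frame, using the Ricci and Einstein data only as a consistency check. The paper instead \emph{derives} the sectional curvatures from that data: it invokes the identities
\[
R=\sum_{i,j}K(e_i\wedge e_j),\qquad G_{\mu\mu}=-g_{\mu\mu}\!\!\sum_{\substack{i<j\\ i,j\neq\mu}}\!\!K(e_i\wedge e_j),
\]
and solves the resulting linear system for the four unknowns $K_{tr}$, $K_{t\theta}=K_{t\phi}$, $K_{r\theta}=K_{r\phi}$, $K_{\theta\phi}$ using the already-computed values in (\ref{Ricci}) and (\ref{G}). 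Your route is more elementary and self-contained (no linear algebra, no reliance on the signs in those summation identities), and it yields the equality $K_{tr}=K_{t\theta}$ as an explicit calculation rather than as output of a system; the paper's route is shorter because it recycles quantities already on the page and avoids ever touching the full Riemann tensor.
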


\begin{proof}
The sectional curvatures readily follow from (\ref{Ricci}), (\ref{G}), and the relations (e.g., \cite[4-6, 4-8]{F1})
\begin{equation*}
R = \sum_{i, j} K(e_i \wedge e_j),  \ \ \ \ \ G_{\mu \mu} = -g_{\mu \mu} \sum_{\substack{i < j;\\ i, j \not = \mu}} K(e_i \wedge e_j).
\end{equation*}
\end{proof}

\section{Gravitational mass} \label{gravitational mass section}

In this section we use the terms `pointon' and `electron' interchangeably. 

Recall that acceleration in the Schwarzschild metric yields Newton's law of gravity,
\begin{equation*}
\nabla_{e_t} e_t = \frac{-r_s}{2r^2} \partial_r = \frac{-Gm}{r^2} \partial_r.
\end{equation*}

\begin{Proposition} \label{accelerations}
The accelerations due to the pointal metrics are
\begin{equation*} \label{acc}
\begin{split}
\text{pointal sphere metric: } \ \ \ \nabla_{e_t}e_t & = \frac{-2(r_0+ \eta)}{(r- \eta)(r+r_0)} \partial_r = \left\{ \begin{array}{ll} \frac{-2(r_0 - |\eta|)}{(r+|\eta|)(r+r_0)} \partial_r & \text{ for $\nu_e$}\\ 
\frac{-2r_0}{r(r+r_0)} \partial_r & \text{ for $e$}\\ 
\frac{-16r_0}{7(r-r_0/7)(r+r_0)} \partial_r & \text{ for $u_1$} 
\end{array} \right.\\
\text{pointal union metric: } \ \ \ \nabla_{e_t}e_t & = \frac{-7(r - r_0/7)}{3(r-r_0/3)(r+r_0)}\frac{r_0}{r} \partial_r \ \ \ \ \ \ \ \ \ \ \ \ \ \ \ \ \ \ \ \ \text{ for $d$}
\end{split}
\end{equation*}
Remarkably, the $u_1$ quark radius $\eta = r_0/7$ appears in the acceleration of the $d$ quark.
\end{Proposition}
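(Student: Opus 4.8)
The plan is to reduce the whole statement to a single computation for a general static diagonal metric and then substitute the explicit metric coefficients from Theorem \ref{sphere metric theorem2}. For any metric of the form (\ref{general metric}), i.e.\ $ds^2 = -A(r)\,dt^2 + dr^2 + B(r)r^2\,d\Omega^2$, the unit timelike vector is $e_t = A(r)^{-1/2}\partial_t$; since $g_{rr} = 1$ the only Christoffel symbol from (\ref{Christoffel2}) that contributes is $\Gamma^{r}_{tt} = A'/2$, and the same computation that yields Newton's law for the Schwarzschild metric at the start of this section (cf.\ also the potential relation (\ref{potential})) gives
\[
\nabla_{e_t}e_t \;=\; -\frac{1}{2}\,\frac{A'(r)}{A(r)}\,\partial_r \;=\; -\frac{1}{2}\big(\ln A(r)\big)'\,\partial_r .
\]
The point is that this expression involves only $A$, not $B$ or the angular part of the metric, so all four configurations are governed solely by their functions $A(r)$.

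The next step is to substitute. For a spinor particle $\nu_e$, a pointon, or a pointal sphere (in particular $u_1$), Theorem \ref{sphere metric theorem2} gives $A(r) = \big((r-\eta)/(r+r_0)\big)^{4}$, so $\big(\ln A\big)' = 4\big(\tfrac{1}{r-\eta}-\tfrac{1}{r+r_0}\big) = 4(r_0+\eta)/\big((r-\eta)(r+r_0)\big)$, and hence $\nabla_{e_t}e_t = -\,2(r_0+\eta)/\big((r-\eta)(r+r_0)\big)\,\partial_r$. Setting $\eta = 0$ recovers the electron formula, setting $\eta = r_0/7$ (so $r_0+\eta = \tfrac{8}{7}r_0$) recovers the $u_1$ formula, and writing $\eta = -|\eta|$ for $\eta \in (-r_0,0)$ recovers the $\nu_e$ formula. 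For the pointal union $d$, Theorem \ref{sphere metric theorem2} gives $A(r) = \big(r(r-r_0/3)/(r+r_0)^{2}\big)^{2}$, so
\[
\big(\ln A\big)' \;=\; 2\Big(\frac{1}{r} + \frac{1}{r-r_0/3} - \frac{2}{r+r_0}\Big);
\]
placing the three terms over the common denominator $r(r-r_0/3)(r+r_0)$, the $r^2$-part of the numerator cancels and what remains is $\tfrac{7r_0}{3}(r - r_0/7)$. This produces $\nabla_{e_t}e_t = -\tfrac{7(r-r_0/7)}{3(r-r_0/3)(r+r_0)}\tfrac{r_0}{r}\,\partial_r$, and the factor $(r-r_0/7)$ appearing in the numerator is exactly the remarked coincidence that the $u_1$-radius $\eta = r_0/7$ surfaces in the $d$-acceleration.

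I do not expect a genuine obstacle: the argument is essentially bookkeeping once the general formula $\nabla_{e_t}e_t = -\tfrac12(\ln A)'\partial_r$ is in hand. The one step that needs a little care is the partial-fraction simplification for $d$, where one must check that the quadratic terms in the numerator of $(\ln A)'$ cancel so that it factors linearly with root exactly at $r = r_0/7$; the only other thing to watch is keeping the overall sign consistent with the Schwarzschild convention used earlier in the section.
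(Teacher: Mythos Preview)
Your proposal is correct and follows essentially the same approach as the paper: both reduce the problem to the general identity $\nabla_{e_t}e_t = -\tfrac{A'}{2A}\,\partial_r$ (obtained from the Christoffel symbols (\ref{Christoffel2})) and then substitute the explicit $A(r)$ from Theorem \ref{sphere metric theorem2}. In fact you supply more detail than the paper, which stops at the general formula and leaves the substitutions and the partial-fraction simplification for $d$ implicit.
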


\begin{proof}
The acceleration $\nabla_u u = u^{\mu} \nabla_{\mu} u$ of the unit $4$-velocity $u = e_t$ is, by (\ref{Christoffel2}),\footnote{Note that $\left\langle \partial_t, \partial_t \right\rangle = g_{tt}$ and $\left\langle e_t, e_t \right\rangle = -1$ impy $(\partial_t)^t = 1$, $(\partial_t)_t = g_{tt}$, $(e_t)^t = \sqrt{-g^{tt}}$, $(e_t)_t = \sqrt{-g_{tt}}$.}
\begin{equation*}
\nabla_{e_t} e_t = (e_t)^{\mu} \nabla_{\mu} e_t = g^{tt} (e_t)_t \nabla_t \sqrt{-g^{tt}} \partial_t
 = g^{tt} \sqrt{-g_{tt}} \sqrt{-g^{tt}} \nabla_t \partial_t
= g^{tt} \Gamma^r_{tt} \partial_r = -\frac{A'}{2A} \partial_r.
\end{equation*}
\end{proof}

\begin{Corollary}
The acceleration $\nabla_{e_t}e_t$ of each configuration is singular at precisely the degenerate points, that is, at the points that have been `removed' from spacetime.
\end{Corollary}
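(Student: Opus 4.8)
The plan is to read the singular locus of each of the four acceleration fields off the explicit formulas in Proposition \ref{accelerations} and to check that it agrees, configuration by configuration, with the set of removed (degenerate) points visible in the metrics of Theorem \ref{sphere metric theorem2}. The starting point is the identity obtained in the proof of Proposition \ref{accelerations}: for any metric of the form (\ref{general metric}), $\nabla_{e_t}e_t = -\tfrac{A'}{2A}\,\partial_r = -\tfrac12(\ln A)'\,\partial_r$. Since $g_{rr}=1$, the field $\partial_r$ is a unit vector, so the coordinate-invariant magnitude of the acceleration is $\tfrac12\,|A'/A|$, and it blows up as $r\to a$ in the physical domain $r\ge 0$ exactly when $A'/A$ is unbounded near $a$. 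Now $A(r)$ is, by Theorem \ref{sphere metric theorem2}, a ratio of powers of the linear factors $r$, $r-\eta$ and $r+r_0$; in particular it is smooth and pole-free on $[0,\infty)$ with only finitely many, finite-order zeros, so $A'/A$ is unbounded near $a\ge 0$ if and only if $A(a)=0$. On the other hand, since $\partial_t=\sqrt{A}\,e_t$, a worldline collapses to a single point of $M$ precisely where $\partial_t$ vanishes, i.e.\ where $A(a)=0$ (for $a>0$ this coincides with $B(a)=0$), whereas the degeneracy of the metric at $a=0$ coming from the $r^2\,d\Omega^2$ term is merely the spherical-coordinate singularity and signals a genuine (removed) point only when in addition $A(0)=0$. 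Hence, for each configuration, the singular set of $\nabla_{e_t}e_t$ in $r\ge 0$, the zero set of $A$ in $r\ge 0$, and the set of removed points all coincide.

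To make this explicit I would substitute the four metrics. For the pointon $e$ one has $A=r^4/(r+r_0)^4$, so all three sets equal $\{0\}$, matching the pole of $-2r_0/(r(r+r_0))\,\partial_r$. For a pointal sphere $u$ of radius $\eta\in\{r_0/7,\,r_0/3,\,3r_0/5\}$ one has $A=((r-\eta)/(r+r_0))^4$, so they equal $\{\eta\}$ (in particular $\{r_0/7\}$ for $u_1$), matching the pole of $-2(r_0+\eta)/((r-\eta)(r+r_0))\,\partial_r$. For the pointal union $d$ one has $A=r^2(r-r_0/3)^2/(r+r_0)^4$, so they equal $\{0,\,r_0/3\}$; here one must note that the numerator of the $d$-acceleration in Proposition \ref{accelerations} is proportional to $r-r_0/7$, and $r_0/7$, being distinct from $0$, $r_0/3$ and $-r_0$, is a zero rather than a pole, so the ``remarkable'' appearance of the $u_1$-radius does not enlarge the singular set. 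Finally, for the spinor particle $\nu_e$ with $\eta\in(-r_0,0)$ one has $(r-\eta)/(r+r_0)>0$ for all $r\ge 0$, hence $A>0$ throughout: $M$ has no degenerate point and $\nabla_{e_t}e_t$ has no singularity on $r\ge 0$, in agreement with the claim.

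The computations are routine, so the real content is bookkeeping, which is where I expect the only (mild) obstacle. One must work throughout on the physical domain $r\ge 0$, so that the formal pole of every acceleration formula at $r=-r_0$ is irrelevant; one must distinguish the genuine, coordinate-invariant blow-up of the acceleration at $r=0$ for $e$ and $d$ --- where $A(0)=0$ --- from the harmless spherical-coordinate singularity at $r=0$ for $\nu_e$ and $u$, where $A(0)\ne 0$ and the acceleration stays bounded; and one must check, in the $d$ case, that a zero of the numerator of $\nabla_{e_t}e_t$ is not mistaken for a pole. None of these should cause real trouble.
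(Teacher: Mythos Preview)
Your proposal is correct and follows the same approach the paper implicitly takes: the Corollary is stated immediately after Proposition~\ref{accelerations} without any explicit proof, so the intended argument is precisely to read off the poles of the acceleration formulas and match them against the degenerate loci of the metrics in Theorem~\ref{sphere metric theorem2}. Your write-up is considerably more careful than the paper---in particular your observations that the formal pole at $r=-r_0$ lies outside the physical domain, that the $u_1$-radius $r_0/7$ in the $d$-acceleration is a zero rather than a pole, and that the $r=0$ coordinate singularity is harmless for $\nu_e$ and $u$---are all points the paper leaves entirely to the reader.
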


\begin{Corollary} \label{positive mass}
Each pointal configuration has positive gravitational mass.
\end{Corollary}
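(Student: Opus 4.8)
The plan is to read off the gravitational mass from the Newtonian limit of the acceleration computed in Proposition \ref{accelerations}, using the same dictionary that the Schwarzschild case illustrates: if $\nabla_{e_t}e_t = -\tfrac{Gm(r)}{r^2}\partial_r$ with $G=1$, then the (gravitational) mass felt at radius $r$ is $m(r) = r^2 \, |(\nabla_{e_t}e_t)^r|$, and the configuration has positive gravitational mass exactly when this quantity is positive for all physically admissible $r$ — i.e. for $r$ outside the degenerate locus, where by the preceding Corollary the acceleration is nonsingular and points inward. So the first step is simply to take each of the four acceleration formulas from Proposition \ref{accelerations} and multiply by $r^2$ to extract $m(r)$.

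Next I would check positivity case by case. For the pointon/electron, $\nabla_{e_t}e_t = \tfrac{-2r_0}{r(r+r_0)}\partial_r$, so $m_e(r) = \tfrac{2r_0 r}{r+r_0} > 0$ for all $r>0$; note it limits to $2r_0$ as $r\to\infty$, giving a finite positive ADM-type mass. For the general pointal sphere with $\eta\in(-r_0,r_0)$, $\nabla_{e_t}e_t = \tfrac{-2(r_0+\eta)}{(r-\eta)(r+r_0)}\partial_r$, so $m(r) = \tfrac{2(r_0+\eta)r^2}{(r-\eta)(r+r_0)}$; since $r_0+\eta>0$ always, $r+r_0>0$, and on the admissible domain $r>\eta$ (so $r-\eta>0$) for the charged cases $\eta\in\{r_0/7,r_0/3,3r_0/5\}$, and $r-\eta>0$ automatically for the neutrino case $\eta<0$, the sign is manifestly positive. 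For the down quark (pointal union), $\nabla_{e_t}e_t = \tfrac{-7(r-r_0/7)}{3(r-r_0/3)(r+r_0)}\tfrac{r_0}{r}\partial_r$, so $m_d(r) = \tfrac{7 r_0\, r\,(r-r_0/7)}{3(r-r_0/3)(r+r_0)}$; on the admissible domain $r>r_0/3$ we have $r - r_0/7 > r_0/3 - r_0/7 > 0$ and $r - r_0/3>0$, so again $m_d(r)>0$. In each case one also observes $\lim_{r\to\infty} m(r)$ is a finite positive number, which is the natural candidate for the total gravitational mass on the asymptotically flat spacetime.

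The only subtlety — and the step I expect to require the most care — is pinning down precisely what "positive gravitational mass" should mean here, since the metrics are not vacuum (Theorem \ref{pointon stress-energy theorem}) and are degenerate on a nonempty locus, so the standard ADM or Komar definitions do not apply verbatim. My resolution is to adopt the quasi-local reading $m(r) := r^2(\nabla_{e_t}e_t)^r$ (the Tolman/Whittaker-style mass function that reduces to $G m$ for Schwarzschild, cf. the displayed Newton's law at the start of Section \ref{gravitational mass section}), show it is positive and increasing-to-a-finite-limit on the admissible domain, and identify that limit as the gravitational mass of the configuration. Once the interpretation is fixed, the proof is the one-line-per-case sign check above, so I would keep the written argument short: state the mass function, invoke Proposition \ref{accelerations}, and note that in every case numerator and denominator are positive on the relevant domain because $r_0+\eta>0$ and $r>\max(\eta,0)$ there.
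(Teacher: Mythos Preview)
Your approach is essentially the paper's: both arguments invoke Proposition \ref{accelerations} to check that the radial component of $\nabla_{e_t}e_t$ is negative on the admissible domain (the paper phrases this qualitatively as ``the acceleration points toward the degenerate points''), and read off positive gravitational mass from that inward pull. One correction to your write-up: the claim that $m(r)$ is \emph{increasing} to its finite limit is false for the quark configurations, since for instance $m_{u_1}(r)\to+\infty$ as $r\to(r_0/7)^+$ --- but monotonicity plays no role in the positivity conclusion, so simply drop that adjective.
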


\begin{proof}
The acceleration $\nabla_{e_t}e_t$ resulting from the $\nu_e$ configuration, which is not degenerate, is negative in the radial direction $\partial_r$ since $r_0 > |\eta|$.
For each degenerate configuration, the acceleration points towards the degenerate points. 
Therefore the gravitational mass of each configuration is positive.
\end{proof}

The only contribution to the total energy density $\tilde{\rho}_e$ of a pointon is the pressure $p_e$, by Theorem \ref{dressed electron mass theorem}.
Thus, we define the effective gravitational mass $m(r)$ of a pointon at a distance $r$ to be the integral of the pressure (i.e., momentum flux) over the ball $B(r)$ of radius $r$ centered at the pointon, in the spatial hypersurface of its rest frame:
\begin{equation} \label{effective mass}
\begin{split}
m(r)c^2 & = \int_{B(r)} p_e dV \\
& = \int_0^{2 \pi} \int_0^{\pi} \int_0^r \left( \frac{4}{\kappa (x+r_0)^2}\frac{r_0^2}{x^2} \right) x^2\sin \theta \, dx \, d\theta \, d\phi
= \frac{2r_0rc^4}{G(r+r_0)},
\end{split}
\end{equation}
where $\kappa := 8 \pi G/c^4$.

\begin{Theorem}
Newton's law of gravity holds for a pointon,
\begin{equation*}
\nabla_{e_t}e_t(r) = \frac{-G m(r)}{r^2} \partial r.
\end{equation*}
\end{Theorem}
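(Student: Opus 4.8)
The plan is to simply combine the two formulas already at hand: the pointon acceleration computed in Proposition \ref{accelerations}, and the effective gravitational mass $m(r)$ defined in (\ref{effective mass}). First I would recall that for a pointon (the $e$ case of Proposition \ref{accelerations}) one has
\begin{equation*}
\nabla_{e_t}e_t = \frac{-2r_0}{r(r+r_0)}\,\partial_r,
\end{equation*}
which also follows directly from the general formula $\nabla_{e_t}e_t = -\tfrac{A'}{2A}\partial_r$ applied to $A(r) = r^4/(r+r_0)^4$ in the pointon metric (\ref{pointon metric here}).

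Next I would recall the value of the effective mass obtained by integrating the pressure $p_e$ of (\ref{pressure}) over the coordinate ball $B(r)$ in the rest-frame spatial hypersurface, namely
\begin{equation*}
m(r)c^2 = \int_{B(r)} p_e\, dV = \frac{2 r_0 r c^4}{G(r+r_0)},
\end{equation*}
so that $m(r) = \dfrac{2 r_0 r c^2}{G(r+r_0)}$. (In natural units $c = G = 1$ this is $m(r) = 2r_0 r/(r+r_0)$, but keeping $c$ and $G$ explicit makes the comparison with Newton's law transparent.)

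Finally I would substitute this into the Newtonian force expression:
\begin{equation*}
\frac{-G m(r)}{r^2}\,\partial_r = \frac{-G}{r^2}\cdot\frac{2 r_0 r c^2}{G(r+r_0)}\,\partial_r = \frac{-2 r_0 c^2}{r(r+r_0)}\,\partial_r,
\end{equation*}
which, setting $c = 1$, coincides exactly with $\nabla_{e_t}e_t$ as recalled above. This completes the identification. There is essentially no obstacle here beyond bookkeeping the factors of $c$ and $G$ consistently between (\ref{pressure}), the integral (\ref{effective mass}), and Proposition \ref{accelerations}; the only point worth a sentence of care is that the integration in (\ref{effective mass}) is carried out with the coordinate volume element $x^2\sin\theta\,dx\,d\theta\,d\phi$ (the spatial metric of the pointon being flat, $d\ell^2 = dr^2 + r^2 d\Omega^2$), so that the elementary integral $\int_0^r x^{-2}(x+r_0)^{-2}\,x^2\,dx = \int_0^r (x+r_0)^{-2}\,dx = r/(r_0(r+r_0))$ produces precisely the factor needed to match the acceleration.
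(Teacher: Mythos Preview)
Your proof is correct and follows essentially the same approach as the paper: both simply combine the pointon acceleration from Proposition \ref{accelerations} with the effective mass formula (\ref{effective mass}) and verify that $-Gm(r)/r^2 = -2r_0/(r(r+r_0))$. The paper's proof is just a one-line chain of equalities citing these two inputs, while you have spelled out the substitution and the bookkeeping of $c$ and $G$ explicitly.
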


\begin{proof}
We have 
\begin{equation*}
\nabla_{e_t}e_t \stackrel{(\textsc{i})}{=} \frac{-2r_0}{r(r+r_0)} \partial_r \stackrel{(\textsc{ii})}{=} \frac{-Gm(r)}{r^2} \partial_r,
\end{equation*}
where (\textsc{i}) holds by Proposition \ref{accelerations} and (\textsc{ii}) holds by (\ref{effective mass}).
\end{proof}

We expect that Newton's law of gravity also holds for the other pointal metrics.

\section{A constituent quark mass ratio} 

In this section we determine a constituent up/down quark mass ratio.
The constituent mass of a quark is its hadron binding energy.
It is an effective mass in QCD that includes gluons and sea-quarks.

Recall that the interaction radius $r_0$ of a pointon is $r_0 = m_e^{-1}$, defined in (\ref{E0 =}).
In a bound state, $r_0$ also depends on the binding energy $\omega_b$ from the coupling (mass) term in the Dirac Lagrangian,
\begin{equation} \label{r_0 with omega}
r_0^{-1} = \tilde{m}_e + \sum \omega_b.
\end{equation}
We call $\omega_b$ \textit{pointal binding energy}.

\begin{Theorem} \label{constituent}
Suppose that the constituent quarks in a hadron share the same pointal binding energy.
Then, for $r \gg r_0$, the accelerations from $u_1, d$, are
\begin{equation*}
\nabla_{e_t}e_t \sim \left\{ \begin{array}{ll} 
\frac{-(16/7)r_0}{r^2} \partial_r & \text{ for $u_1$} \\
\frac{-(7/3)r_0}{r^2}\partial_r & \text{ for $d$}
\end{array} \right.
\end{equation*}
We therefore obtain a constituent mass ratio for $r \gg r_0$, 
\begin{equation*}
\frac{\tilde{m}_{u_1}}{\tilde{m}_d} = \frac{16/7}{7/3} = \frac{48}{49} \approx .97959.
\end{equation*} 
\end{Theorem}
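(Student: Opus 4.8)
The plan is to begin from the closed-form accelerations recorded in Proposition~\ref{accelerations}, pass to the asymptotic regime $r \gg r_0$, and then read off the effective gravitational masses via the Newtonian correspondence $\nabla_{e_t}e_t \sim -Gm(r)\,r^{-2}\partial_r$; the quotient of these masses will be the asserted constituent mass ratio.

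First I would expand the $u_1$ acceleration. From Proposition~\ref{accelerations}, $\nabla_{e_t}e_t = \tfrac{-16r_0}{7(r-r_0/7)(r+r_0)}\partial_r$, and since $(r-r_0/7)(r+r_0) = r^2\bigl(1 + \tfrac 67\tfrac{r_0}{r} - \tfrac 17\tfrac{r_0^2}{r^2}\bigr)$, the leading term as $r/r_0 \to \infty$ is $\tfrac{-(16/7)r_0}{r^2}\partial_r$. Similarly, for the $d$ quark $\nabla_{e_t}e_t = \tfrac{-7(r-r_0/7)}{3(r-r_0/3)(r+r_0)}\tfrac{r_0}{r}\partial_r$, and writing $\tfrac{r-r_0/7}{(r-r_0/3)(r+r_0)} = \tfrac 1r\bigl(1+O(r_0/r)\bigr)$ gives the leading term $\tfrac{-(7/3)r_0}{r^2}\partial_r$. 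This establishes the two displayed asymptotics; the error terms are $O(r_0^2/r^3)$ and hence irrelevant to the ratio.

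Next I would invoke the Newtonian limit. By the theorem at the end of Section~\ref{gravitational mass section}, the relation $\nabla_{e_t}e_t = -Gm(r)\,r^{-2}\partial_r$ holds for a pointon, and the analogous relation is expected to hold asymptotically for the pointal sphere and pointal union. Comparing coefficients with the asymptotics above identifies the $r \gg r_0$ gravitational masses as $m_{u_1} = \tfrac{16}{7}r_0/G$ and $m_d = \tfrac 73 r_0/G$. Because the interaction radius of a bound pointon is fixed by $r_0^{-1} = \tilde m_e + \sum\omega_b$ (see~(\ref{r_0 with omega})), the hypothesis that the constituent quarks $u_1$ and $d$ of the hadron carry the same pointal binding energy $\sum\omega_b$ forces them to share the same $r_0$; the common factors $r_0$ and $G$ then cancel in the quotient, giving $\tilde m_{u_1}/\tilde m_d = (16/7)/(7/3) = 48/49$. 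The only genuine obstacle here is interpretive rather than computational: one must accept the identification of the leading asymptotic acceleration coefficient with the constituent mass, and --- crucially --- invoke the equal--binding--energy hypothesis to guarantee a common $r_0$; since only the leading $r^{-2}$ coefficient enters, the part of the Newtonian correspondence being used is its most robust part, even though it has been proved in full only for the pointon.
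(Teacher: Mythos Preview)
Your proof is correct and follows essentially the same approach as the paper: invoke the equal--binding--energy hypothesis together with (\ref{r_0 with omega}) to force a common $r_0$, then read off the leading $r^{-2}$ coefficients from the exact accelerations in Proposition~\ref{accelerations} and take their ratio. The paper's own argument is much terser (it simply cites Proposition~\ref{accelerations} after noting $r_0$ is shared), whereas you spell out the asymptotic expansions and the Newtonian identification explicitly; these are welcome elaborations but not a different route.
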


\begin{proof}
If $u_1, d$ share the same binding energy $\omega_b$, then they share the same interaction radii $r_0$ by (\ref{r_0 with omega}).
The theorem then from Proposition \ref{accelerations}.
\end{proof}

In Proposition \ref{prop2} we found that the radius of the $d$ sphere is $\eta_d = r_0/3$, and the radius of the $u_1$ sphere is $\eta_{u_1} = r_0/7$.
Furthermore, in Proposition \ref{accelerations} we found, quite curiously, that the acceleration $\nabla_{e_t}e_t$ from the $d$ sphere vanishes precisely at the radius of the $u_1$ sphere. 
It is therefore possible for the $u_1$ sphere to sit inside the $d$ sphere and remain at rest. 
This suggests a model in which the constituent quarks of a hadron are nested spheres, centered at the same point and having equal interaction radii $r_0$. 
In particular, if two constituent quarks are of the same flavor, then they would share the same configuration (pointal sphere or pointal union), but possess orthogonal spin vectors corresponding to their color charges. 
We leave this possibility for future investigation.

\section{The energy density of electron neutrinos} \label{neutrino section}

The pointal sphere and pointal union energy densities are given in (\ref{ny}).  
Graphs of $\rho_{\text{sp}}$, with $n = \pm 1$, and $\rho_d$ over the $(r,r_0)$-plane are shown in Figure \ref{cp}.\footnote{The figure was made using GeoGebra\textsuperscript \textregistered, available at \tt{https://www.geogebra.org}.}
On the domain $\{r \geq r_0 > 0\}$, $\rho_d$ is positive. 
Similarly, $\rho_{\text{sp}}$ is \textit{positive} for $n > 0$, \textit{zero} for $n = 0$, and \textit{negative} for $n < 0$:
\begin{equation*}
\operatorname{sgn}\rho_{\text{sp}}|_{r \geq r_0>0} = \operatorname{sgn}n.
\end{equation*}
Thus, for $r \geq r_0$, 
\begin{itemize}
 \item the down quark $d$ and up quarks $u_n$, with $n \in \{1,2,3\}$, generate positive energy densities; 
 \item the electron $e$, with $n = 0$, generates zero energy density; and
 \item the electron neutrino $\nu_e$, with $n \in (-4,0)$, generates a negative energy density.
\end{itemize}
For $r < r_0$, the energy densities may be positive or negative.

\begin{figure}
\includegraphics[width=12cm]{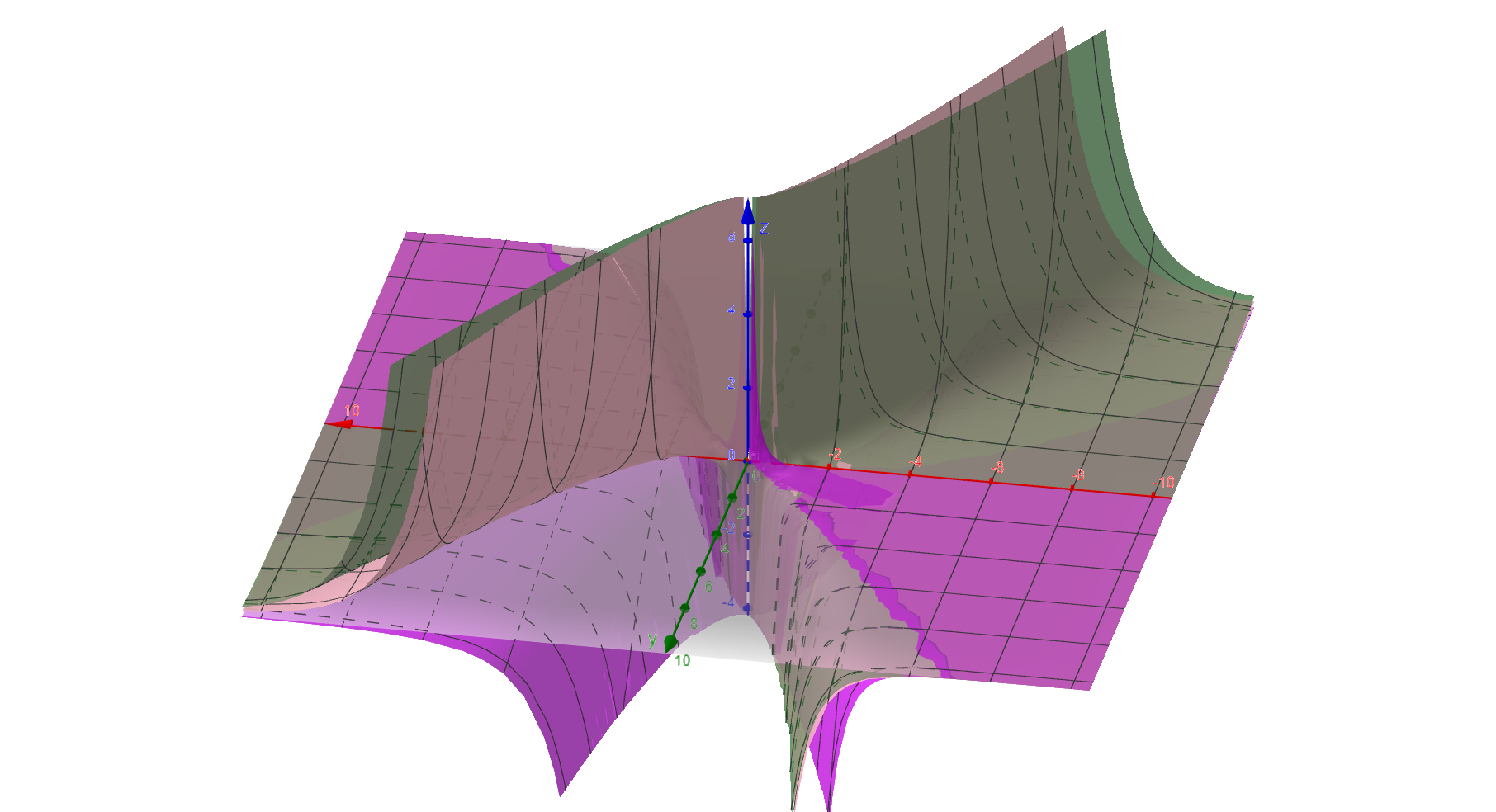}
\caption{Graphs of the energy densities $\rho_{\text{sp}}$, $n = \pm 1$, and $\rho_d$ over the $(r,r_0)$-plane. The graph of $\rho_{\text{sp}}$ with $n = -1$ (drawn in magenta) corresponds to the electron neutrino and lies below the $(r,r_0)$-plane.}
\label{cp}
\end{figure}

Although the electron neutrino generates a negative energy density for $r \geq r_0$, it nevertheless has a positive gravitational mass by Corollary \ref{positive mass}. 
Gravitational energy similarly has a negative energy density.
It is thus possible that an inhomogeneous sea of electron neutrinos would act like additional gravitation.
Since electron neutrinos are weakly interacting, this suggests that they may be a possible dark matter candidate.
It would not be their mass, however, that would yield dark matter, but the negative energy density they generate.
This possibility requires further investigation to determine whether it is a viable model.

\ \\
\textbf{Acknowledgments:}
The author thanks an anonymous referee for their careful reading and helpful comments.
The author was supported by the Austrian Science Fund (FWF) grant P 34854.\\
\\
\textbf{Data Availability Statement:} There is no data associated to the paper.

\bibliographystyle{hep}
\def\cprime{$'$} \def\cprime{$'$}

\end{document}